\newif\ifgray

\documentclass[twocolumn,pagebackref]{article}

\usepackage[
	includefoot,
	top=1.6cm,
	bottom=1.3cm,
	left=1.35cm,
	right=1.35cm,
	a4paper,
]{geometry}

\setlength{\columnsep}{2em}

\makeatletter
\renewenvironment{abstract}{%
	\small
	\quotation
	\noindent{\bfseries\abstractname.}%
}{\endquotation}
\renewcommand\section{\@startsection {section}{1}{\z@}%
                                   {-2.5ex \@plus -1ex \@minus -.2ex}%
                                   {1.3ex \@plus.2ex}%
                                   {\normalfont\large\bfseries}}
\renewcommand\subsection{\@startsection{subsection}{2}{\z@}%
                                     {-2.25ex\@plus -1ex \@minus -.2ex}%
                                     {0.5ex \@plus .2ex}%
                                     {\normalfont\normalsize\bfseries}}
\renewcommand\subsubsection{\@startsection{subsubsection}{3}{\z@}%
                                     {-2.25ex\@plus -1ex \@minus -.2ex}%
                                     {0.5ex \@plus .2ex}%
                                     {\normalfont\normalsize\bfseries}}
\renewcommand\paragraph{\@startsection{paragraph}{4}{\z@}%
                                    {2.25ex \@plus1ex \@minus.2ex}%
                                    {-1em}%
                                    {\normalfont\normalsize\bfseries}}
\renewcommand\subparagraph{\@startsection{subparagraph}{5}{\parindent}%
                                       {2.25ex \@plus1ex \@minus .2ex}%
                                       {-1em}%
                                      {\normalfont\normalsize\bfseries}}
\makeatother

\usepackage{soul}
\usepackage{url}
\usepackage[x11names, table]{xcolor}
\usepackage[utf8]{inputenc}
\usepackage[small]{caption}
\usepackage{graphicx}
\usepackage{amsmath}
\usepackage{booktabs}
\usepackage{authblk}

\usepackage[T1]{fontenc}
\usepackage{amssymb}
\usepackage[outline]{contour}
\usepackage{enumerate}
\usepackage{dsfont}
\usepackage{tabularx}
\usepackage[ruled,vlined,linesnumbered]{algorithm2e}

\usepackage[square,numbers]{natbib}
\bibliographystyle{plainnat}

\usepackage[linkcolor=red!50!black,citecolor=green!40!black,colorlinks]{hyperref}
\usepackage{doi}
\usepackage{paralist}
\usepackage{tikz}
\usetikzlibrary{calc,positioning}
\usepackage{pgfplots}
\pgfplotsset{compat=1.5}
\pgfplotsset{major grid style={very thin,gray!20!white}}
\usepackage{pgfplotstable}
\usepackage{subcaption}

\usepackage{mathtools}

\usepackage{multirow}

\usepackage[utf8]{inputenc}

\usepackage{enumerate}
\usepackage{amsmath, amssymb, amsthm}
\usepackage{mathtools}
\usepackage{tabularx,booktabs}
\usepackage{dsfont}
\usepackage{xspace}
\usepackage{ mathrsfs }
\usepackage{pifont}
\newcommand{\cmark}{\ding{51}}
\newcommand{\xmark}{\ding{55}}

\usepackage[capitalize,nameinlink,sort&compress]{cleveref}
\crefname{construction}{Construction}{Constructions}
\crefname{claim}{Claim}{Claims}
\crefname{paragraph}{Paragraph}{Paragraphs}
\crefname{observation}{Observation}{Observations}
\crefname{theorem}{Theorem}{Theorems}
\crefname{lemma}{Lemma}{Lemmata}
\crefname{proposition}{Proposition}{Propositions}
\crefname{corollary}{Corollary}{Corollaries}
\crefname{remark}{Remark}{Remarks}
\crefname{section}{Section}{Sections}
\crefname{chapter}{Chapter}{Chapters}
\crefname{figure}{Figure}{Figures}
\crefname{table}{Table}{Tables}
\crefname{definition}{Definition}{Definitions}
\crefname{algorithm}{Algorithm}{Algorithms}
\crefname{equation}{Equation}{Equations}
\crefname{appendix}{Appendix}{Appendices}

\newtheorem{lemma}{Lemma}
\newtheorem{theorem}{Theorem}
\newtheorem{corollary}{Corollary}
\newtheorem{definition}{Definition}
\newtheorem{proposition}{Proposition}

\newtheorem{example}{Example}
\newtheorem{observation}{Observation}{\bf}{\em}

\newcommand{\cs}{\text{Cs}} %
\newcommand{\ds}{\text{Ds}} %
\newcommand{\mds}{\text{mDs}} %
\newcommand{\pop}{\text{d}_\text{pm}} %
\newcommand{\mmc}{\text{MMC}} %
\newcommand{\kt}{\text{d}_\text{Kt}} %
\newcommand{\spdist}{\text{d}_\text{sp}} %

\DeclareMathOperator{\dist}{\kt} %

\usepackage{tikz}
\usetikzlibrary{shapes,arrows, decorations.pathmorphing}

\newcommand{\appsymb}{$\bigstar$}
\newcommand{\appref}[1]{{\appsymb}}

\newcommand{\appendixsection}[1]{%
  \gappto{\appendixProofText}{\section{Additional Material for Section~\ref{#1}}\label{app:#1}}
}

\newcommand{\toappendixx}[1]{}
\newcommand{\toappendix}[1]{%
\gappto{\appendixProofText}
  {{
    #1
  }}
}
\newcommand{\appendixproof}[2]{%
  \gappto{\appendixProofText}
  {
    \subsection{Proof of \cref{#1}}\label{proof:#1}
    #2
  }
}

\usepackage{etoolbox}
\usepackage{marginnote}
\usepackage[textsize=scriptsize]{todonotes} %
\newcommand{\mytodo}[2]{\todo[size=\tiny, color=#1!50!white]{#2}\xspace}
\newcommand{\myrevtodo}[2]{{%
		\let\marginpar\marginnote
		\reversemarginpar
		\renewcommand{\baselinestretch}{0.8}%
		\todo[size=\tiny, color=#1!50!white]{#2}\xspace}}
\newcommand{\myinlinetodo}[2]{\todo[size=\small, color=#1!50!white, inline, caption={}]{#2}\xspace}
\newcommand{\registerAuthor}[3]{%
	\expandafter\newcommand\csname #2com\endcsname[1]{\mytodo{#3}{\textsc{#2}: 
	##1}}%
	\expandafter\newcommand\csname 
	#2revcom\endcsname[1]{\myrevtodo{#3}{\textsc{#2}: ##1}}%
	\expandafter\newcommand\csname 
	#2inline\endcsname[1]{\myinlinetodo{#3}{\textsc{#2}: ##1}}%
	\expandafter\newcommand\csname 
	#2inlineLater\endcsname[1]{\lv{\myinlinetodo{#3}{\textsc{#2}: ##1}}}%
}

\registerAuthor{Robert}{rb}{yellow}
\registerAuthor{Jonas}{ji}{green!50!gray}
\registerAuthor{Leon}{lk}{gray!10!white}
\registerAuthor{Marie}{mg}{green!40!blue!25!white}

\newcommand{\profile}{\ensuremath{\mathcal{P}}}
\newcommand{\rrule}{\ensuremath{\mathcal{R}}}
\newcommand{\peak}[1]{\ensuremath{\top(#1)}}

\title{%
\Large \bf Single-Peaked Opinion Updates
}

\author{Robert Bredereck}
\affil{\normalsize Algorithm Engineering, Humboldt-Universität Berlin, Germany}
\affil{\normalsize Institut für Informatik, TU Clausthal, Germany\\ \texttt{robert.bredereck@tu-clausthal.de}}
\author{Anne-Marie George}
\affil{\normalsize Analytical Solutions and Reasoning, University of Oslo, Germany\\ \texttt{annemage@ifi.uio.no}}
\author{Jonas Israel}
\affil{\normalsize Research Group Efficient Algorithms, Technische Universität Berlin, Germany\\ \texttt{j.israel@tu-berlin.de}}
\author{Leon Kellerhals}
\affil{\normalsize Algorithmics and Computational Complexity, Technische Universität Berlin, Germany\\ \texttt{leon.kellerhals@tu-berlin.de}}

\date{}

\begin{document}

\maketitle

\begin{abstract}
We consider opinion diffusion for undirected networks with sequential updates when the opinions of the agents are single-peaked preference rankings.
Our starting point is the study of  \emph{preserving single-peakedness}.
We identify voting rules that, when given a single-peaked profile, output at least one ranking that is single peaked w.r.t.\ a single-peaked axis of the input.
For such voting rules we show convergence to a stable state of the diffusion process that uses the voting rule as the agents' update rule.
Further, we establish an efficient algorithm that maximises the spread of \emph{extreme} opinions.
\end{abstract}

\section{Introduction}\label{sec: introduction}
	Ahead of elections, but also for competing products on a market, finding a way to maximally spread one particular opinion through exposure of contents to targeted agents has become of increasing interest.
	Advances in understanding the diffusion of opinions help to grasp the extent to which opinions can be manipulated and spread in networks.
	We study opinion diffusion in a setting where agents' opinions are modelled as single-peaked rankings over a set of candidates.
	In each update step one agent observes the preferences of all their neighbours in the network,
	aggregates these by a given voting rule
 	and changes their opinion accordingly. %
 	For issues where preferences are naturally single-peaked, it seems reasonable to assume that also the updated preferences of an agent in a diffusion process remain single-peaked. We investigate which voting rules are applicable in this sense, which lead to converging diffusion dynamics, and whether it is tractable to find update sequences that maximally spread an extreme opinion.
	
	Research has found that computing an optimal sequence of updates to spread a specific opinion is easy for two competing opinions \cite{BE17} but it turns out to be hard in most cases involving multiple independent opinions \cite{AFFG19,AFG20}. However, for other scenarios, notably elections, the agents' opinions are better modeled by rankings over some candidates than by independent opinions.
	In this paper, we explore this additional structure on opinions which allows us to consider various known voting rules (like Kemeny, (weak) Dodgson, and Minimax Condorcet) as update rules.
Since different opinions are not necessarily treated equally under some voting rules, our work significantly differs from work studying opinion diffusion of multiple independent opinions which often use some simple threshold function for the updating process. 
	For example, under some voting rule one opinion (i.e. ranking) might only be adopted if and only if the majority of the agent's neighbours bares this opinion, whereas the same might not be true for other opinions under the same rule. 

	Originally motivated by preference aggregation in context of economic phenomena
	such as prices or quantities~\cite{Black1948}, single-peakedness is probably the most
	prominent restricted preference domain in social choice~\cite{BBHH15,FHH14,FHHR09}. 
	This domain restriction solves many computational and conceptual issues of preference aggregation: among other inviting properties, the aggregation of rankings becomes tractable (e.g. for Kemeny), Condorcet cycles cannot exist, and Arrow's impossibility theorem does not apply anymore.
	While political elections are often not (perfectly) single-peaked,
	preferences in other settings
	often depend on some one-dimensional criterion, such as when
	voting on the temperature in a room,
	choosing the starting time of some event,
	fixing the voting age for an election, or considering an adequate price for a product.
    For Kemeny's rule (and thus the many other rules that
    coincide with Kemeny in the single-peaked domain),  it is easy to see
    that, given single-peaked preferences, also at least one Kemeny outcome ranking is single-peaked.
	We investigate whether the same can be said for other ranking rules. %
	
The paper first provides basic definitions in \Cref{sec: prelim}. 
	The main part of the paper revolves around the following key questions, for each of which we briefly discuss related work here. We conclude in \Cref{sec: conclusion}.

	\paragraph{Which rules preserve single-peakedness? (\cref{sec: preserving sp})}%
	As we consider opinion diffusion of single-peaked preferences, we want to identify ranking rules that allow agents' preferences to remain single-peaked after updates. Under the single-peaked domain, it is known that Condorcet winners exist such that ranking adaptions of Slater's rule~\cite{slater1961inconsistencies}, Ranked Pairs/Tideman~\cite{tideman1987independence}, Beat Path/Schulze~\cite{schulze2011new}, and Split Cycle~\cite{holliday2020split} -- which repeatedly pick Condorcet winners -- coincide with Kemeny's rule. Furthermore, Kemeny's rule preserves single-peakedness \cite{Tru98}.
	We show that the same is true for some rules that do not coincide with Kemeny (in particular Minimax Condorcet and weak Dodgson, as well as Borda's and Copeland's rule when restricted to three candidates). 
	Furthermore, we identify that, in general, Dodgson's rule, Copeland's rule, Borda's rule, and Single Transferable Vote do not preserve single-peakedness.
	\cref{tab:rules} gives an overview over the properties of the listed rules.
\begin{table}[t]
    \centering
    \caption{
    An overview which ranking rules are weak Condorcet winner (CWC) and loser (CLC) consistent, single-peaked preserving (SPP), and extremist majority consistent (EMC).
    See \cref{sec: prelim,sec: general results extr maj conist} for definitions of the properties.
    }
    \begin{tabular}{l c c c c}
        \toprule
        Rule & CWC & CLC & SPP & EMC\\
        \midrule
        Kemeny                  & \cmark & \cmark& \cmark & \cmark \\
        Minimax Condorcet       & \cmark & \xmark& \cmark & \cmark \\
        Weak Dodgson            & \cmark & \xmark& \cmark & \cmark \\
        Dodgson                 & \cmark & \xmark& \xmark & \xmark \\
        Copeland                & \cmark & \cmark& \xmark & \xmark \\
        Single Transferable Vote\!\!\!& \xmark & \xmark& \xmark & \xmark \\
        Borda                   & \xmark & \xmark& \xmark & \xmark \\
        \bottomrule
    \end{tabular}
    \label{tab:rules}
\end{table}
	
	\paragraph{Which rules converge to a stable state? (\cref{sec: convergence})}
	While some works consider convergence of diffusion processes under simultaneous updates of the agents \cite{FKW13,ZWW+20a,CLPT20a}, we focus on sequential updates. 
	For the case of only two opinions a standard update rule is to follow the (strict) majority of the neighbours' opinions.
	Here, the diffusion can be shown to always converge within a bounded number of steps \cite{FKW13}. Considering more than two opinions allows to consider thresholds and in particular majority update rules or even averaging operators in the case of continuous numbers as opinions \cite{Nor20,AFFG19,AFG20}. \citet{BJK20} show that any sequence of majority updates is finite when given $k \in \mathds{N}$ opinions. 
	\citet{FCKT18a} consider rankings as opinions and establish among other results that following majority updates converges. In their model however every vertex corresponds to a cluster of exactly those voters that have the same ranking preference and two such clusters are connected when the rankings are the same up to one swap. 
	Most similarly to our work, \citet{HYT+13} and \citet{BEEG16a} also consider rankings as opinions of individual agents in a network. Both consider random update sequences and allow opinion updates in form of single swaps of adjacent candidates within a voter's ranking. We show that these types of updates can replicate any Kemeny update sequence in the single-peaked domain when replacing one Kemeny update by repeated updates of single swaps by one user. However, \citeauthor{HYT+13} consider only complete networks and \citeauthor{BEEG16a} consider acyclic directed networks and simple directed cycles; hence their convergence results are not transferable to our setting.
	We show that for single-peaked opinions Kemeny and Minimax Condorcet updates converge, i.e., any update sequence is finite.
	
	\paragraph{Is it tractable to find a sequence of updates that promotes the maximal spread of an (extreme) opinion? (\cref{sec: spreading extreme opinions})}
	In line with many works on opinion diffusion we study the maximal spread of one particular opinion.
	While for the case of two opinions a simple greedy approach guarantees a stable state with a maximal spread of one opinion \cite{FKW13}, the problem of finding such a sequence when three or more opinions are present becomes NP-hard for strict majority updates even on very restricted graph structures \cite{BJK20}. In addition, \citet{AFFG19} show that this problem is hard for three opinions when using weak majority updates but identify some tractable graph structures.
	As we consider single-peaked preference rankings as opinions, we focus on the spread of \emph{extreme} opinions, i.e., the opinions that rank the left- or rightmost candidates on the single-peaked axis the highest. Here we show that under voting rules for which an extreme ranking can only be adopted if and only if a (weak) majority of neighbours hold this opinion, a greedy approach 
	similar to the approach for two opinions and majority updates
	can guarantee a maximum number of stable agents with the extreme opinion (but the remaining agents do not necessarily become stable). This case applies to Kemeny and Minimax Condorcet updates as well as updates based on weak Dodgson (as defined in~\cite{Fishburn77}). 
Note that these results only hold since only updates to extreme opinions must follow the weak majority of their neighbours, in contrast to the general hardness results for majority updates on more than two opinions for converging sequences established by  \citet{AFFG19} and \citet{AFG20}.

\section{Preliminaries}\label{sec: prelim}
A {\em preference profile}~$\profile=(C,V)$ consists of a set~$C$ of $m$~candidates
and a set~$V$ of $n$~voters.
Each voter~$v$ has a preference list (or ranking) $\succ_v$ over the candidates.
If, for instance, $C = \{a,b,c\}$, the order~$c \succ_v b \succ_v a$ means that 
voter $v$ prefers $c$ the most and $a$ the least.
We omit the subscript from~$\succ$ if it is clear from context.

\paragraph{Restricted Preferences.}
Central to our work is the concept of single-peakedness, a domain restriction that assumes that voters' preferences are mainly influenced by the position of the candidates on a one-dimensional axis.
\begin{definition}
We call a preference profile $\profile=(C,V)$ \emph{single-peaked} if there is some ordering $\rhd$ of the candidates $C$, the so called \emph{single-peaked axis}, such that for all $c_i, c_j, c_k \in C$ with $c_i \rhd c_j \rhd c_k$ or $c_k \rhd c_j \rhd c_i$ each voter $v \in V$ satisfies $c_j \succ_v c_k$ if $c_i \succ_v c_j$. We call the most preferred candidate of voter $v$'s preference its \emph{peak} $\peak{v}$.
\end{definition}%
\noindent
In the remainder of the paper, when talking about a single-peaked preference profile $\profile = (C,V)$ without further specification, we assume $|C|=m$ and that $\profile$ is single-peaked with respect to the axis $c_1 \rhd \dots \rhd c_m$.
We denote by $\mathcal{L}$ the set of all rankings over candidates $C$ and by $\mathcal{L}_{sp}^\rhd(C)$
the set of single-peaked rankings over $C$ w.r.t.\ the single-peaked axis $\rhd$. The \emph{extreme opinion} $(c_1 \succ \dots \succ c_m) \in \mathcal{L}_{sp}^\rhd(C)$ is denoted by $r^{\uparrow}$ and the opposite extreme $(c_m \succ \dots \succ c_1) \in \mathcal{L}_{sp}^\rhd(C)$ by $r^{\downarrow}$.%

\paragraph{Condorcet Winners and Losers.}
For a preference profile $\profile=(C,V)$, a (weak) Condorcet winner $c$ is a candidate that is not defeated by another candidate in a pairwise comparison, i.e., for all $c' \in C \setminus \{c\}$ we have $|\{v \in V \mid c \succ_v c'\}|\ge|\{v \in V \mid c' \succ_v c\}|$.
A  strict Condorcet winner is a candidate $c$ that strictly defeats all other candidates in a pairwise comparison, i.e., for all $c' \in C \setminus \{c\}$ we have $|\{v \in V \mid c \succ_v c'\}|>|\{v \in V \mid c' \succ_v c\}|$.
The notions of (weak) Condorcet loser and strict Condorcet loser are defined analogously.
While this is not true for general profiles, for single-peaked profiles a weak Condorcet winner and loser always exist and the set of weak Condorcet winners can be determined as indicated by the Median Voter Theorem.

\begin{lemma}[\cite{Black1948}]\label{le: extension-of-median-voter-thm}
Let~$\profile$ be a single-peaked preference profile. Consider the order of the voters' top-ranked candidates along the single-peaked axis. Then the set of median peaks of the voters' preferences and any candidate between median peaks is the set of (weak) Condorcet winners.
\end{lemma}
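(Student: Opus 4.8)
The plan is to reduce everything to a single observation about pairwise comparisons under single-peakedness, and then run a counting argument anchored at the median peak(s).

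The key observation I would first establish is the following monotonicity fact. Suppose $c_i \rhd c_j$ (that is, $c_i$ precedes $c_j$ on the axis). If a voter $v$ has peak $\peak{v} \unrhd c_i$ (the peak lies weakly to the left of $c_i$), then $c_i \succ_v c_j$; symmetrically, if the peak lies weakly to the right of $c_j$, then $c_j \succ_v c_i$. Both follow directly from the definition of single-peakedness: in the first case, if $\peak{v}=c_i$ the peak is top-ranked and we are done, and if $\peak{v} \rhd c_i$ we apply the definition to the triple $\peak{v} \rhd c_i \rhd c_j$, using that $\peak{v}$ is the top-ranked candidate (so $\peak{v} \succ_v c_i$) to conclude $c_i \succ_v c_j$; the second case is symmetric. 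The point is that a voter's pairwise choice is \emph{forced} whenever both candidates lie weakly on the same side of that voter's peak, and only voters whose peak lies strictly between the two candidates remain undetermined.

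Next I would order the voters' peaks from left to right along the axis and let $a$ be the $\lceil n/2\rceil$-th and $b$ the $(\lfloor n/2\rfloor + 1)$-th peak in this order; for odd $n$ these coincide, and in general the claimed winner set is exactly the candidates $c$ with $a \unrhd c \unrhd b$. For the forward inclusion, fix such a $c$ and any other candidate $c'$. If $c'$ lies to the right of $c$, then by the observation every voter with peak weakly left of $c$ prefers $c$ to $c'$; since $c$ lies weakly right of $a$, at least $\lceil n/2\rceil \ge n/2$ voters have such a peak, so $c$ weakly beats $c'$. The case where $c'$ lies left of $c$ is symmetric, using the at least $n/2$ voters with peak weakly right of $b$. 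Hence every candidate between the two median peaks is a weak Condorcet winner. For the converse, I would exhibit a strict defeat for any candidate strictly left of $a$ (the right case being symmetric): if $c \rhd a$, compare $c$ with the median peak $a$ itself; every voter with peak weakly right of $a$ prefers $a$ to $c$, and there are at least $\lfloor n/2\rfloor + 1 > n/2$ such voters (those of sorted index $\ge \lceil n/2\rceil$), so $a$ strictly beats $c$. Combining the two inclusions yields the claimed equality.

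The main obstacle, and the real reason the median is the correct object, is the block of voters whose peak lies strictly between the two compared candidates, whose preferences single-peakedness leaves unspecified. The argument sidesteps this entirely by only ever counting the voters whose choice is forced by the observation; the defining property of the median is precisely that this forced block already constitutes a (weak) majority. The only remaining care is the bookkeeping around ties among peaks and the parity of $n$, which is exactly what makes two distinct medians $a$ and $b$ (rather than a single one) appear in the even case, and hence what produces the interval of winners between them.
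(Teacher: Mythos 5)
Your proof is correct: the monotonicity observation (a voter's pairwise comparison between $c_i \rhd c_j$ is forced whenever the peak lies weakly outside the pair), the counting of the $\lceil n/2\rceil$ forced voters for the forward inclusion, and the strict defeat against the median peak itself for the converse are all valid, and the parity bookkeeping with the $\lceil n/2\rceil$-th and $(\lfloor n/2\rfloor+1)$-th peaks is handled properly. Note, however, that the paper states this lemma as a cited classical result (Black's Median Voter Theorem) and gives no proof of its own, so there is no in-paper argument to compare against; your write-up is essentially the standard textbook proof of that theorem and could stand as a self-contained justification.
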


\paragraph{Ranking Rules.}
A \emph{ranking rule} is an aggregator function $\rrule \colon \mathcal{L}^n \to \mathscr{P}(\mathcal{L})$ (where $\mathscr{P}$ denotes the power set) that aggregates the preference rankings of voters into an output set of preference rankings.
If ranking rule $\rrule$ always outputs rankings that rank a (weak) Condorcet winner highest whenever such a candidate exists,
we call $\rrule$ \emph{(weak) Condorcet winner consistent}.
Similarly, a ranking rule $\rrule$ is called \emph{(weak) Condorcet loser consistent},
if $\rrule$ always outputs rankings that rank a (weak) Condorcet loser lowest whenever one exists.
Note that while it is usually desired to aggregate the preference rankings of  voters into a single output ranking, a property often called \emph{resoluteness}, most ranking rules fail to achieve this.
In this case, one often uses tie-breaking rules to resolve irresoluteness.
Specific ranking rules that we consider more closely and a general tie-breaking rule that is applied throughout this paper are defined at the end of this section.

\paragraph{Diffusion Process.}
For a subset of voters~$V' \subseteq V$ of a preference profile $\profile=(C,V)$ we define 
the preference profile
induced by $V'$ as $\profile[V']=(C,V')$.
A \emph{preference network} is a graph~$G=(V,E)$ with some profile~$(C,V)$
where the voters coincide with the vertices.

Given a preference network~$G=(V,E)$ with preference profile~$\profile=(C,V)$,
we consider the following opinion diffusion process.
At each \emph{update step}, a voter~$v$ (also called the \emph{active} voter) applies a ranking rule $\rrule$ on
the preference profile induced by their neighbourhood $N(v)$ in the
preference network and takes the aggregated ranking as their new opinion.
Formally, we replace the preference ranking of~$v$ 
by a ranking in $\rrule(\profile[N(v)])$.
We speak of update rules instead of ranking rules when considering opinion diffusion processes.

If an active voter $v$ does not change their opinion, i.e., a ranking in $\rrule(\profile[N(v)])$ coincides with $v$'s current preference ranking, then we call the voter $v$ \emph{stable}.
A preference network $G=(V,E)$ with preference profile $\profile=(C,V)$ is in a \emph{stable state} if all the voters in $V$ are stable.
We call a sequence of voters $(v_1, \dots, v_k)$ an \emph{update cycle} in preference network $G$ with preference profile $\profile$ and ranking rule $\rrule$,
when updating voters' opinions along the voter sequence leads to the same preference profile $\profile$ for $G$.
A ranking rule $\rrule$ is said to \emph{converge} if any sequence of voter updates in any preference network $G$ with profile $\profile$ is finite, i.e., results in a stable state. That is, if $\rrule$ converges there cannot exist any update cycles.

\appendixsection{subsec: kemeny sp preserving}
\paragraph{Kemeny's rule.} Kemeny's rule returns those rankings that minimise Kendall's tau distance to all voters' preference rankings.
For a pair of rankings~$\succ,\succ'$, \emph{Kendall's tau distance}
(or ``inversion distance'', since it counts the inversions between two permutations)
between~$\succ$ and~$\succ'$  is defined as
\[\dist(\succ,\succ') = \sum\nolimits_{\{c,c'\} \subseteq C}d_{\succ,\succ'}(c,c'),\]
where $d_{\succ,\succ'}(c,c')$ is set to~$0$ if~$\succ$ and~$\succ'$ rank~$c$
and~$c'$ in the same order, and is set to~$1$, otherwise.  

The \emph{Kemeny score} of a ranking~$r$ with respect to a profile~$\profile = (C,V)$ is defined as~$\dist(r, \profile):=\sum_{v \in V} \dist(r,\succ_v)$.
A ranking~$r$ with a minimal Kemeny score is called a \emph{Kemeny ranking} of~$\profile$
and its score is the \emph{Kemeny score} of the profile.

\citet{BEEG16a} propose an update procedure by which voters swap the positions of two adjacent candidates in their ranking whenever the (strict) majority of their neighbours orders the two candidates in this way.
When all voters' preferences are single-peaked, their update rule can replicate Kemeny's rule by making every voter execute such updates repeatedly until they reach a fixed preference. 
However, as the work of Brill et al.\ concerns opinion diffusion in directed graphs (that are acyclic or simple cycles), their results are not transferable to our setting.
Further details on this relation are deferred to \cref{apsec: Kemeny-and-Brilletal-update-relation}.

\toappendix
{
\subsection{Relation of Kemeny Updates and Brill et al.'s Update Rule}
\label{apsec: Kemeny-and-Brilletal-update-relation}
\citet{BEEG16a} propose an update procedure by which voters swap the positions of two adjacent candidates in their ranking whenever the (strict) majority of their neighbours orders the two candidates in this way.
As this update rule bares great similarity to Kemeny's rule, we analyse whether it can replicate Kemeny updates.

The following example shows, that this is not the case in general.
\begin{example}
Consider a voter in a preference network with opinion $a \succ c \succ b$ that has four neighbours with the following opinions.
    \begin{align*}
        1 \times ~~ &a \succ b \succ c \\
        2 \times ~~ &b \succ a \succ c \\
        1 \times ~~ &a \succ c \succ b 
    \end{align*}
    
The Kemeny ranking in this case is $b \succ a \succ c$. However, the only local update that the voter can make is to swap candidates $c$ and $b$ to the opinion $a \succ b \succ c$, since three out of the four neighbours agree on $b \succ c$. No further updates are then possible.
\end{example}

However, as we argue in the following, in a single-peaked domain Brill et al.'s update rule can replicate updates based on Kemeny's rule, when a voter updates their opinion by pairwise swaps repeatedly until no further update is possible. 

Kemeny's rule is weak Condorcet loser consistent and a weak Condorcet loser always exists in a single-peaked domain. In the case of two candidates this is obviously true. For an inductive argument, assume, that the statement is true for $m$ candidates. Let $\profile$ be a single-peaked preference profile of the voter's neighbours over $m+1$ candidates. If a single Condorcet loser $c$ exists in $\profile$, then swaps can be performed repeatedly to place this Condorcet loser at the bottom of the voters preference ranking, because $c$ is beaten by any other candidate in a majority comparison. Because Kemeny's rule is Condorcet loser consistent, all Kemeny rankings of $\profile$ consist of Kemeny rankings of $\profile \setminus \{c\}$ that are extended by $c$ at the bottom. But since by our induction hypothesis any Kemeny update of a voter over $m$ candidates can be replicated by pairwise swaps following the strict majority opinion of the neighbours, the same holds for the Kemeny update of the voter with neighbourhood profile $\profile$.
Now assume that both extreme candidates $c_1, c_m$ are weak Condorcet losers in $\profile$. Then because of tie-breaking based on Kendall's tau distance, the Kemeny update for the voter will rank $c \in \{c_1, c_m\}$ last if and only if $c$ is ranked last in the voters own current opinion. Because of the induction hypothesis and because Kemeny's rule is weak Condorcet loser consistent, the voter can perform pairwise swaps following the majority opinion to gain the Kemeny update ranking.

As mentioned in the main paper, while Brill et al.'s~\cite{BEEG16a} update rule can replicate Kemeny updates, they consider this rule only on directed networks.
}

\paragraph{Minimax Condorcet.}
\appendixsection{subsec: MMC sp preserving}

Minimax Condorcet (MMC), also known as Simpson's rule, ranks candidates by their worst defeat in pair-wise comparisons. Let $\pop(c',c) \coloneqq |\{v \in V \mid c' \succ_v c \}| - |\{v \in V \mid c \succ_v c' \}|$ be the \emph{popularity margin} between $c, c' \in C$. Then MMC ranks candidates $c \in C$ in non-decreasing order by their \emph{MMC score} $\mmc(c) \coloneqq \max_{c' \in C} \pop(c',c)$.
MMC differs from Kemeny's rule even in the single-peaked domain, e.g., by violating Condorcet loser consistency.
\begin{observation}[\appref{obs:minmaxcond-cl-inconsistent}$^1$]
	\label{obs:minmaxcond-cl-inconsistent}
	MMC is not Condorcet loser consistent even in the single-peaked domain.
\end{observation}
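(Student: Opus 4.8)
The plan is to prove the negative result by exhibiting an explicit single-peaked counterexample: a profile that has a (strict) Condorcet loser which MMC nevertheless fails to rank last. The guiding intuition is that $\mmc(c)$ records only the single worst defeat of a candidate, so a candidate that loses \emph{every} pairwise comparison by a moderate margin can end up with a smaller (hence better) MMC score than a candidate that wins most of its comparisons but suffers one lopsided defeat. I would therefore engineer a profile in which the Condorcet loser loses to everyone only narrowly, while some non-loser is crushed badly by a single opponent.

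Concretely, I would work with three candidates on the axis $c_1 \rhd c_2 \rhd c_3$, writing $a=c_1$, $b=c_2$, $c=c_3$; here single-peakedness restricts the admissible rankings to the four orders with peak $a$, peak $b$ (leaning either way), or peak $c$. I would choose vote multiplicities so that (i) $a$ loses pairwise to both $b$ and $c$, making $a$ the strict Condorcet loser, yet (ii) $b$ defeats $c$ by a margin strictly larger than $a$'s worst defeat. A small instance achieving both is: two voters with $a \succ b \succ c$, three voters with $b \succ c \succ a$, and one voter with $c \succ b \succ a$.

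The verification then proceeds in three short steps. First, confirm single-peakedness by checking each of the three ranking types against the axis $a \rhd b \rhd c$. Second, compute the pairwise margins: $b$ and $c$ each beat $a$ (so $a$ is the unique, strict, hence also weak, Condorcet loser), while $b$ beats $c$ by the larger margin $\pop(b,c)=4$. Third, evaluate the MMC scores as worst defeats; with these numbers one gets $\mmc(b) = -2 < \mmc(a) = 2 < \mmc(c) = 4$, so the \emph{unique} MMC ranking is $b \succ a \succ c$, which places the Condorcet loser $a$ in the middle rather than last. This contradicts Condorcet loser consistency, and since the output is unique, no tie-breaking can repair it.

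The only genuine subtlety, and where I would be most careful, is balancing the two competing constraints on the margins simultaneously: $a$ must truly lose to both other candidates (its worst defeat positive), yet that worst defeat must stay strictly below $c$'s defeat against $b$. Because single-peakedness discards two of the six possible orders over three candidates, the feasible region of vote counts is restricted, so I would write the three pairwise margins as explicit linear expressions in the multiplicities and confirm that the chosen values lie strictly inside the region cut out by ``$a$ is a Condorcet loser'' and ``$\mmc(c) > \mmc(a)$'' before fixing the example.
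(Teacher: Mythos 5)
Your proposal is correct: the profile is single-peaked on $a\rhd b\rhd c$, the margins are $\pop(b,a)=\pop(c,a)=2$ and $\pop(b,c)=4$, so $a$ is the strict Condorcet loser yet $\mmc(b)=-2<\mmc(a)=2<\mmc(c)=4$ forces the unique output $b\succ a\succ c$. This is essentially the paper's approach (an explicit single-peaked counterexample exploiting that MMC only records the single worst defeat), just with three candidates instead of the paper's four.
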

\footnotetext[1]{Proofs of results marked with \appref{} are deferred to the appendix.}
\appendixproof{obs:minmaxcond-cl-inconsistent}{
\begin{proof}
    Consider the single-peaked preference profile with axis~$a\rhd b\rhd c\rhd d$, given by
    \begin{align*}
        1 \times ~~ &a \succ b \succ c \succ d \\
        2 \times ~~ &b \succ c \succ d \succ a
    \end{align*}
    The resulting scores are $\mmc(a) = 1$, $\mmc(b)=-1$, $\mmc(c) = 3$ and $\mmc(d)=3$. Thus either candidate $c$ or candidate $d$ gets ranked last by MMC even though $a$ is the strict Condorcet loser.
\end{proof}
}

\paragraph{Tie-Breaking.}
The ranking rules we consider are in general irresolute. Since the active voter in the opinion diffusion process can change only to a single new opinion we need a way to break ties. First, we always restrict the set of winning rankings to those that are single-peaked w.r.t.\ the given single-peak axis, if existent. 
Further, we envision the active voter to tend to that single-peaked ranking for which they do not need to adapt their current opinion too much, i.e., we break possible further ties in favor of the single-peaked ranking that minimises Kendall's tau distance to the current opinion of the active voter.
If still two or more rankings are tied by their Kendall's tau distance, we break ties arbitrarily, e.g., in lexicographical order with respect to the single-peaked axis.

\section{Preserving Single-Peakedness}\label{sec: preserving sp}
In this section, we consider the question of whether a ranking rule preserves single-peakedness, i.e., outputs at least one single-peaked ranking when given a single peaked profile. This is not only relevant for our later analysis of diffusion of single-peaked opinions, but also extends the existing research on this domain restriction.

\begin{definition}
    A ranking rule $\rrule$ preserves single-peakedness if, given a single-peaked preference profile $\profile=(C,V) \in \mathcal{L}_{sp}^\rhd(C)$ with axis $\rhd$, it always outputs at least one single-peaked ranking w.r.t. $\rhd$, i.e., $\rrule(\profile) \cap \mathcal{L}_{sp}^\rhd(C) \neq \emptyset$.
\end{definition}

We next investigate which ranking rules preserve single-peakedness.
Intriguingly, our results show that preserving single-peakedness is independent of being (weak) Condorcet winner or loser consistent:
Copeland's rule is weak Condorcet winner and loser consistent but does not preserve single-peakedness.
Conversely, Minimax Condorcet preserves single-peakedness but is not Condorcet loser consistent.
It is easy to construct rules that preserve single-peakedness but are not Condorcet winner consistent.
One such example is the rule that always returns the ranking~$r^\downarrow$.

\subsection{Kemeny's Rule and its Equivalents on the Single-Peaked Domain}\label{subsec: kemeny sp preserving}

Kemeny's rule preserves single-peakedness~\cite{CGS13,Tru98,BetzlerBN14}. For self containment we provide an induction-based proof in the appendix.
\begin{observation}[\appref{obs:kemeny-sp}]
	\label{obs:kemeny-sp}
    Kemeny's rule preserves single-peaked\-ness.
\end{observation}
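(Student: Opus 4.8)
The plan is to induct on the number $m$ of candidates, peeling off one \emph{extreme} candidate from the \emph{bottom} of the ranking at each step. The base cases $m \le 2$ are immediate, since every ranking over at most two candidates is single-peaked. For the inductive step I would take a single-peaked profile $\profile=(C,V)$ with axis $c_1 \rhd \dots \rhd c_m$, identify an extreme candidate $c \in \{c_1,c_m\}$ that is a weak Condorcet loser, and show that some Kemeny ranking of $\profile$ places $c$ last and restricts to a Kemeny ranking of $\profile^{-c}$, the profile obtained by deleting $c$ from every voter's ranking. By the induction hypothesis $\profile^{-c}$ -- which is single-peaked with respect to the axis obtained by removing the endpoint $c$ -- has a single-peaked Kemeny ranking $r'$, and appending the extreme candidate $c$ below $r'$ yields a ranking that is single-peaked with respect to the full axis. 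Peeling from the bottom is essential here: prepending the (possibly interior) Condorcet winner on top need not preserve single-peakedness, whereas an extreme candidate placed last only creates pairwise comparisons that are vacuously consistent with the axis condition.

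The first key step is to show that at least one endpoint of the axis is a weak Condorcet loser. For any voter $v$, single-peakedness forces the least-preferred candidate to be an endpoint, and this endpoint is $c_1$ exactly when $c_m \succ_v c_1$ and is $c_m$ exactly when $c_1 \succ_v c_m$. I would partition the voters into $A = \{v : c_m \succ_v c_1\}$ and $B = \{v : c_1 \succ_v c_m\}$, so that $|A|+|B| = n$. Assuming without loss of generality that $|A| \ge |B|$, I claim $c_1$ is a weak Condorcet loser: every $v \in A$ ranks $c_1$ last and hence satisfies $c_j \succ_v c_1$ for all $j \ge 2$, so $|\{v : c_j \succ_v c_1\}| \ge |A| \ge |B| \ge |\{v : c_1 \succ_v c_j\}|$ and therefore $\pop(c_j,c_1) \ge 0$ for every $c_j$. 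The symmetric case makes $c_m$ a weak Condorcet loser.

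The second step is the decomposition of the Kemeny score. Starting from an arbitrary Kemeny ranking and repeatedly swapping the weak Condorcet loser $c$ downward past its lower neighbour $c'$ changes the score by $|\{v : c \succ_v c'\}| - |\{v : c' \succ_v c\}| \le 0$, so $c$ can be moved to the bottom without increasing the Kemeny score; hence some Kemeny ranking ranks $c$ last. Moreover, for any ranking $r$ with $c$ last, writing $r^{-c}$ for its restriction, one has $\dist(r,\profile) = \dist(r^{-c},\profile^{-c}) + \sum_{v} |\{c' \ne c : c \succ_v c'\}|$, where the final sum is a constant independent of $r$. Thus a ranking with $c$ last is Kemeny-optimal for $\profile$ if and only if its restriction is Kemeny-optimal for $\profile^{-c}$, which lets me choose the restriction to be the single-peaked ranking $r'$ supplied by the induction hypothesis and append $c$, closing the induction.

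The main obstacle is the first step: establishing that an \emph{endpoint} of the axis (rather than merely some candidate) is a weak Condorcet loser. This is delicate because cross-peak comparisons, i.e.\ comparisons between candidates lying on opposite sides of a voter's peak, are not determined by single-peakedness alone, so one cannot simply read off the pairwise majority margins from the axis. The partition-by-bottom-candidate counting argument is precisely what circumvents this difficulty; the swap argument and the score decomposition are then routine.
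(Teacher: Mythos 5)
Your proof is correct and follows essentially the same route as the paper's: induction on the number of candidates, identifying an extreme weak Condorcet loser, peeling it off the bottom, and appending it to a single-peaked Kemeny ranking of the reduced profile supplied by the induction hypothesis. The only difference is that you explicitly justify two steps the paper merely asserts (that an axis endpoint is always a weak Condorcet loser, via the partition by bottom candidate, and that some Kemeny ranking places it last and restricts to a Kemeny ranking of the reduced profile, via the adjacent-swap and score-decomposition arguments), which is a welcome tightening rather than a deviation.
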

\appendixproof{obs:kemeny-sp}{
\begin{proof}
    Our proof is by induction over the number of candidates $m$.
    As any ranking of~$m=2$ candidates is single-peaked, this is our base case.
    Assume that the statement is true for $m$ candidates
    and let $\profile$ be a single-peaked preference profile over the axis $c_1 \rhd \dots \rhd c_m \rhd c_{m+1}$.
    For any single-peaked order over the $m+1$ candidates, either $c_1$ or $c_{m+1}$ must be ranked last.
    Hence the set of (weak) Condorcet losers $CL$ contains at least one of $c_1$ or $c_{m+1}$.
    For a candidate~$c \in CL$ let~$\profile_c$ be the profile obtained by removing~$c$ from all rankings in~$\profile$.
    We can construct the set of all Kemeny rankings of $\profile$ by extending the Kemeny rankings of $\profile_c$ with $c$ as a last ranked candidate for all $c \in CL$.
    When deleting any candidate, in particular any (weak) Condorcet loser $c \in CL$, from the preference profile $\profile$, single-peakedness of all preference rankings is still preserved.
    Let $c \in CL \cap \{c_1,c_m\}$.
    By our assumption, there exists a single-peaked Kemeny ranking for $\profile_c$, $c \in CL$.
    Extending this ranking with $c$ as a last ranked candidate preserves its single-peakedness. Thus, there exists a single-peaked Kemeny ranking of $\profile$.
   \end{proof}
}

Many social choice functions can be defined by specifying their selection process on (weighted) majority graphs.
By the Median Voter Theorem~(\cref{le: extension-of-median-voter-thm}), a majority graph under single-peaked preferences is acyclic.
Thus any two social choice functions that can be defined over (weighted) majority graphs and only differ in the way they handle cycles in the graph will output the same winning set under single-peaked preferences. 
Furthermore, we can adapt any social choice function into a ranking rule by repeatedly deleting one of the (possibly multiple) winners from the profile and appending it to the ranking. %
It is easy to see that, in the single-peaked domain, the ranking adaptions of Slater's rule~\cite{slater1961inconsistencies}, Ranked Pairs/Tideman~\cite{tideman1987independence}, Beat Path/Schulze~\cite{schulze2011new} and Split Cycle~\cite{holliday2020split} coincide with Kemeny's rule.

\subsection{Minimax Condorcet}\label{subsec: MMC sp preserving}

To prove that MMC preserves single-peakedness we use the following two lemmas.
The first shows that when computing the MMC score of a candidate we only need to consider the candidate's direct neighbours on the single-peaked axis. The second provides us with a structure on the set of all rankings that are single-peaked with respect to the same axis. 
\begin{lemma}[\appref{lem:mmc_worst_defeat}]\label{lem:mmc_worst_defeat}
Let $\profile$ be a single-peaked profile on the axis $c_1 \rhd \dots \rhd c_m$. Then any $c_i$, $i \in [m]$, experiences a worst defeat in $\profile$ against $c_{i-1}$ or $c_{i+1}$ (if they exist),
i.e., $\pop(c_\ell, c_i) \ge \pop(c_j, c_i)$ 
for all~$j \in [m]\setminus \{i\}$ and some~$\ell \in \{i-1, i+1\}$.
\end{lemma}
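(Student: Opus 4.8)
The plan is to bound each popularity margin $\pop(c_j,c_i)$ by that of the neighbour of $c_i$ lying between $c_j$ and $c_i$ on the axis, and to show the neighbour's margin dominates. The convenient reformulation is that, since every voter ranks $c_i$ and any $c'$ strictly, $|\{v : c'\succ_v c_i\}| + |\{v : c_i\succ_v c'\}| = n$, and hence $\pop(c',c_i) = 2\,|\{v : c'\succ_v c_i\}| - n$. Thus $\pop(\cdot,c_i)$ is monotone in the number of voters preferring the candidate to $c_i$, and it suffices to establish the set containment $\{v : c_j\succ_v c_i\} \subseteq \{v : c_\ell\succ_v c_i\}$, where $\ell = i-1$ when $j<i$ and $\ell = i+1$ when $j>i$.

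The heart of the argument is the following single-peaked fact: for indices $b<a<i$ (so that $c_a$ lies strictly between $c_b$ and $c_i$), every voter $v$ with $c_b\succ_v c_i$ also satisfies $c_a\succ_v c_i$. I would prove this by a case distinction on the position of the peak $\peak{v}$. If $\peak{v}$ lies weakly right of $c_i$, then $c_b,c_a,c_i$ all sit weakly left of the peak and one-sided monotonicity of single-peaked preferences gives $c_i\succ_v c_a\succ_v c_b$, making the hypothesis $c_b\succ_v c_i$ vacuous. If $\peak{v}$ lies weakly left of $c_b$, the three candidates sit weakly right of the peak and monotonicity yields $c_b\succ_v c_a\succ_v c_i$ directly. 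In the remaining case the peak lies strictly between $c_b$ and $c_i$; I split on whether it is weakly right of $c_a$ (where $c_a\succ_v c_b$ combines with the hypothesis $c_b\succ_v c_i$ via transitivity) or strictly left of $c_a$ (where $c_a$ and $c_i$ are on the same side of the peak with $c_a$ closer, giving $c_a\succ_v c_i$ outright).

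To assemble the lemma, I apply the core claim with $a=i-1$, $b=j$ for any $j<i-1$ to get the left containment, the case $j=i-1$ being a trivial equality; the right-hand side $j>i$ follows from the mirror-image of the same argument, using that single-peakedness is invariant under reversing the axis, with $a=i+1$. Through the monotonicity above this gives $\pop(c_j,c_i)\le\pop(c_{i-1},c_i)$ whenever $j<i$ and $\pop(c_j,c_i)\le\pop(c_{i+1},c_i)$ whenever $j>i$ (dropping the missing neighbour when $i\in\{1,m\}$). Hence $\max_{j\in[m]\setminus\{i\}}\pop(c_j,c_i)$ is attained at some $c_\ell$ with $\ell\in\{i-1,i+1\}$, which is exactly the worst-defeat statement and also feeds directly into the definition of $\mmc$.

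The one genuinely delicate step is the middle case of the core claim, where the peak separates $c_b$ from $c_i$: there neither candidate is governed by a single one-sided monotonicity comparison against $c_i$, so the argument must route through the hypothesis $c_b\succ_v c_i$ and transitivity. Everything else is bookkeeping of axis positions and the elementary translation between $\pop$ and voter counts.
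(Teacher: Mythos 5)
Your proof is correct and rests on the same idea as the paper's: any voter who prefers a far candidate $c_j$ to $c_i$ must also prefer the neighbour of $c_i$ lying between them to $c_i$, so the containment $\{v : c_j\succ_v c_i\}\subseteq\{v : c_\ell\succ_v c_i\}$ bounds the popularity margins. The paper obtains this in one step by exhibiting a voter with $c_{j}\succ_v c_i$ and $c_i\succ_v c_\ell$ and noting this directly violates the single-peakedness condition on the triple $(c_i,c_\ell,c_j)$, whereas you reach the same containment through a case analysis on the position of $\peak{v}$; the extra casework is sound but not needed.
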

\appendixproof{lem:mmc_worst_defeat}{
\begin{proof}
Towards a contradiction assume that in $\profile$ a candidate $c_i$ experiences their worst defeat against some $c_{i+k}$ for $k \geq 2$ but not against $c_{i+1}$ (the case for $c_{i-k}$ is analogous).
Then there has to be at least one voter $v \in V$ such that $c_{i+k} \succ_v c_i$ and $c_i \succ_v c_{i+1}$. This violates single-peakedness of~$\profile$.
\end{proof}
} %

\begin{lemma}[\appref{lem:mmc_ranking_sequence}]\label{lem:mmc_ranking_sequence}
For an axis $c_1 \rhd \dots \rhd c_m$ there exists an ordering of all single-peaked rankings $r_1, \ldots, r_{2^{m-1}} \in \mathcal{L}_{sp}^\rhd(C)$ such that for all $c_{i}, c_{i+1} \in C$, $i \in [m-1]$, there exists a threshold $h \in [2^{m-1}]$ with
\begin{align*}
    c_i &\succ_r c_{i+1} ~~~\text{for all rankings} ~ r \in \{r_1, \ldots, r_h\} ~~~ \text{ and} \\
    c_{i+1} &\succ_r c_{i} ~~~~~~~\text{for all rankings} ~ r \in \{r_{h+1}, \ldots, r_{2^{m-1}}\}.
\end{align*} 
Further, $r_1, \ldots, r_h$ are exactly those single-peaked rankings where the candidates $c_1, \ldots, c_i$ are the peaks.
\end{lemma}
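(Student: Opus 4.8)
The plan is to show that the relative order of two axis-adjacent candidates in a single-peaked ranking is governed entirely by the position of the peak, and then to obtain the claimed ordering simply by sorting all single-peaked rankings by their peak.

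First I would establish the key characterization: for every $r \in \mathcal{L}_{sp}^\rhd(C)$ with peak $\peak{r} = c_p$ and every $i \in [m-1]$, we have $c_i \succ_r c_{i+1}$ if and only if $p \le i$. For the ``if'' direction, when $p = i$ the candidate $c_i$ is the peak and the claim is immediate, and when $p < i$ the triple $c_p \rhd c_i \rhd c_{i+1}$ together with $c_p \succ_r c_i$ (the peak being top-ranked) yields $c_i \succ_r c_{i+1}$ directly from the single-peakedness definition. The case $p \ge i+1$ is symmetric, using the triple $c_i \rhd c_{i+1} \rhd c_p$ (when $p=i+1$ the peak is $c_{i+1}$ itself, and when $p>i+1$ one applies the contrapositive of the definition). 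This shows that, for each fixed pair $(c_i, c_{i+1})$, the single-peaked rankings split cleanly into those with peak in $\{c_1, \dots, c_i\}$, which put $c_i$ above $c_{i+1}$, and those with peak in $\{c_{i+1}, \dots, c_m\}$, which put $c_{i+1}$ above $c_i$.

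With this in hand, I would define the ordering $r_1, \dots, r_{2^{m-1}}$ by listing the single-peaked rankings in non-decreasing order of peak index, breaking ties within a peak class arbitrarily. For a pair $(c_i, c_{i+1})$ set $h$ to be the number of single-peaked rankings whose peak lies in $\{c_1, \dots, c_i\}$; by construction these are exactly the rankings $r_1, \dots, r_h$, which is precisely the last sentence of the statement. The characterization then gives $c_i \succ_r c_{i+1}$ for all $r \in \{r_1, \dots, r_h\}$ and $c_{i+1} \succ_r c_i$ for all $r \in \{r_{h+1}, \dots, r_{2^{m-1}}\}$, as required; since there is always a single-peaked ranking with peak $c_1$ and one with peak $c_m$, the suffix and prefix are both nonempty and $h \in [2^{m-1}]$.

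The only point that needs care is that a \emph{single} ordering works simultaneously for all $m-1$ pairs. This is exactly where the peak characterization pays off: the threshold sets $\{r : \peak{r} \in \{c_1, \dots, c_i\}\}$ are nested in $i$, so sorting by peak realises all thresholds at once, and because the relative order of every adjacent pair depends only on the peak, the arbitrary tie-breaking inside a peak class never splits a threshold. I expect this consistency-across-pairs observation, rather than the characterization itself, to be the only genuinely delicate step; the combinatorial fact that there are $2^{m-1}$ single-peaked rankings is standard and is needed here only for indexing.
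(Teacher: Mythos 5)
Your proof is correct, and it takes a genuinely different route from the paper's. The paper proves the lemma by an iterative construction: starting from an explicit ordering for $m=3$, it appends the new axis-endpoint candidate $c_m$ to each ranking of the $(m-1)$-candidate ordering in every admissible position (in order of decreasing position of $c_m$) and finally appends $c_m \succ \dots \succ c_1$, then verifies that thresholds survive the extension. You instead prove the structural fact that for any $r \in \mathcal{L}_{sp}^\rhd(C)$ with peak $c_p$ one has $c_i \succ_r c_{i+1}$ if and only if $p \le i$, and obtain the ordering by sorting all single-peaked rankings by peak index; the threshold sets $\{r : \peak{r} \in \{c_1,\dots,c_i\}\}$ are nested in $i$, so one ordering realises all $m-1$ thresholds simultaneously, and arbitrary tie-breaking within a peak class is harmless because the adjacent-pair order depends on the peak alone. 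Your argument is shorter and isolates the reason the lemma holds, and it delivers the final sentence of the statement (the identification of $r_1,\dots,r_h$ with the rankings peaked in $\{c_1,\dots,c_i\}$) by definition rather than by tracking it through an induction; it also makes the monotonicity $h_{-2}\le h_{-1}\le h_{+1}\le h_{+2}$ used later in the proof of Theorem~\ref{thm:mmc_sp_preserving} immediate from nestedness. What the paper's version buys in exchange is a concrete recursive enumeration of $\mathcal{L}_{sp}^\rhd(C)$, which incidentally re-derives the count $2^{m-1}$, whereas you quote that count as known; that is the only external fact you rely on, and it is standard.
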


\begin{proof}[Proof sketch]
We construct the ordering iteratively.
For $m=3$ choose
$(c_1 \succ c_2 \succ c_3, ~~ c_2 \succ c_1 \succ c_3, ~~ c_2 \succ c_3 \succ c_1, ~~ c_3 \succ c_2 \succ c_1).$
We then iteratively include candidates at the end of the single-peaked axis. For this we take the rankings of the previous ordering (over $m-1$ candidates) and replace them with an ordering of rankings where the new candidate $c_{m}$ is included in every possible position (not violating single-peakedness, starting with the highest position possible. Lastly we append the sequence with $c_{m} \succ c_{m-1} \succ \dots \succ c_1$.
\end{proof}

\appendixproof{lem:mmc_ranking_sequence}{
\begin{proof}{}
We prove the claim constructively by providing an algorithm to construct the desired ordering iteratively.
For $m=1$ and $m=2$ the task is trivial. For $m=3$ candidates a possible ordering of all single-peaked rankings satisfying the above properties is given by $(c_1 \succ c_2 \succ c_3, ~~ c_2 \succ c_1 \succ c_3, ~~ c_2 \succ c_3 \succ c_1, ~~ c_3 \succ c_2 \succ c_1)$.
We now describe how to obtain such an ordering of rankings over $m$ candidates given a sequence of rankings over $m-1$ candidates $c_1 \succ \dots \succ c_{m-1}$. W.l.o.g.\ we can add the new candidate $c_{m}$ to the very end of the single-peaked axis such that the new axis is $c_1 \succ \dots \succ c_{m-1} \succ c_{m}$. Going from the first ranking in the ordering to the last we do the following. For each such ranking of $m-1$ candidates we generate rankings for $m$ candidates by adding $c_{m}$ into every possible position lower than the position of $c_{m-1}$. Then we append these rankings to the sequence of rankings in order of decreasing position of $c_{m}$. Finally, we append one additional ranking to the end of the sequence, namely $c_{m} \succ c_{m-1} \succ \ldots \succ c_1$.

It is easy to see that the rankings generated this way are single-peaked w.r.t.\ the given axis. 
Furthermore, since the relative order of all other candidates $c_1, \ldots, c_{m-1}$ is not changed, the claim still holds. To see this fix some $i \in [m-2]$ and consider the two candidates $c_i$ and $c_{i+1}$. Let $r$ be a ranking over the first $m-1$ candidates for which $c_i \succ_r c_{i+1}$ and let $r'$ be a ranking over $m$ candidates constructed by the algorithm from $r$. Then we know that still $c_i \succ_{r'} c_{i+1}$ since the relative order of $c_i$ and $c_{i+1}$ was not changed. The same is true for all rankings on $m-1$ candidates where $c_{i+1} \succ_r c_i$. Thus the newly constructed sequence of rankings over $m$ candidates will again have a threshold in the above sense.
It remains to show that the same is true for $i=m-1$. But since except for the last ranking the candidate $c_{m}$ always got added below $c_{m-1}$ in each ranking, we have $c_{m-1} \succ_r c_{m}$ for all newly constructed rankings except for the last one. Thus the threshold for $c_{m-1}$ and $c_{m}$ is $2^{m-1}-1$.
The claim about the peaks of the rankings $r_1, \dots, r_h$ follows by construction.
\end{proof}
}

We can now prove that MMC always outputs at least one single-peaked ranking if the input profile is single-peaked. Assuming that this is not the case, we reach a contradiction by analysing the MMC scores of three neighbouring candidates that are creating a dip with the help of the order of single-peaked rankings constructed in \cref{lem:mmc_ranking_sequence}. 

\begin{theorem}[\appref{thm:mmc_sp_preserving}]\label{thm:mmc_sp_preserving}
MMC preserves single-peakedness.
\end{theorem}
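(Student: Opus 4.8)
The plan is to translate single-peakedness of the output ranking into a shape condition on the MMC scores read along the axis, and then to derive that condition from the two lemmas. Recall that a ranking over $c_1 \rhd \dots \rhd c_m$ is single-peaked w.r.t.\ $\rhd$ exactly when it has no interior \emph{dip}, i.e.\ no $c_i$ with $1<i<m$ that is ranked below both $c_{i-1}$ and $c_{i+1}$; equivalently, its top-$k$ candidates form an interval of the axis for every $k$. Since MMC orders candidates by non-decreasing $\mmc$ score, it suffices to show that the sequence $\mmc(c_1),\dots,\mmc(c_m)$ is \emph{valley-shaped} along the axis, i.e.\ non-increasing up to some position and non-decreasing afterwards: a valley-shaped score sequence can always be realised by a single-peaked ordering (place a minimum-score candidate at the peak and merge the two non-decreasing ``arms'' of scores outward along the axis, which keeps every top-$k$ set an interval even in the presence of ties).

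To obtain the two ingredients, write $d_i \coloneqq \pop(c_i,c_{i+1})$ for the margin of adjacent candidates. First I would establish that the margins are monotone along the axis. By single-peakedness a voter $v$ prefers $c_i$ to $c_{i+1}$ precisely when $\peak{v} \in \{c_1,\dots,c_i\}$; this is exactly the threshold structure of \cref{lem:mmc_ranking_sequence}, whose first $h$ rankings are those with peaks in $\{c_1,\dots,c_i\}$. Hence the number of voters with $c_i \succ_v c_{i+1}$ equals the number of voters whose peak lies in $\{c_1,\dots,c_i\}$, which is non-decreasing in $i$, so $d_i = 2\,|\{v : c_i \succ_v c_{i+1}\}| - n$ is non-decreasing in $i$. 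Second, \cref{lem:mmc_worst_defeat} tells us that the worst defeat of $c_i$ is suffered against an axis-neighbour, so $\mmc(c_i) = \max\bigl(\pop(c_{i-1},c_i),\,\pop(c_{i+1},c_i)\bigr) = \max(d_{i-1},\,-d_i)$, with the convention $d_0 = -\infty$ and $d_m = +\infty$ at the boundary.

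The conclusion then falls out of the max-formula. Since $(d_i)$ is non-decreasing, $i \mapsto d_{i-1}$ is non-decreasing and $i \mapsto -d_i$ is non-increasing, so $\mmc(c_i)$ is the pointwise maximum of a non-decreasing and a non-increasing sequence; the set of indices where $-d_i$ dominates is an initial segment, on which $\mmc$ equals $-d_i$ and is non-increasing, while on the complementary final segment $\mmc$ equals $d_{i-1}$ and is non-decreasing --- exactly the valley shape. Equivalently, in the contradiction framing suggested by the analysis of a dip: an interior dip at $c_i$ would require $\mmc(c_i) > \mmc(c_{i-1})$ and $\mmc(c_i) > \mmc(c_{i+1})$, but if $\mmc(c_i) = d_{i-1}$ then $\mmc(c_{i+1}) \ge d_i \ge d_{i-1} = \mmc(c_i)$, and if $\mmc(c_i) = -d_i$ then $\mmc(c_{i-1}) \ge -d_{i-1} \ge -d_i = \mmc(c_i)$; either way one of the strict inequalities fails, so no dip can occur.

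The main obstacle is the bookkeeping around the two boundary candidates and, above all, ties among the scores: ``no strict interior local maximum'' alone does not force a valley (a plateau-topped sequence has none yet is not a valley), so I rely on the stronger valley structure coming directly from the max-of-monotone form, which guarantees that scores are non-decreasing as one moves away from the valley bottom in either direction. This is what lets the outward merge produce an honest single-peaked ranking consistent with non-decreasing $\mmc$ score, completing the argument that MMC outputs at least one ranking in $\mathcal{L}_{sp}^\rhd(C)$.
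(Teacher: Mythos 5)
Your proof is correct, and it takes a genuinely different route from the paper's. The paper argues by contradiction: it assumes every MMC output has a dip at some $c_i$, i.e.\ $\mmc(c_{i-1})<\mmc(c_i)>\mmc(c_{i+1})$, and then uses the enumeration of single-peaked rankings from \cref{lem:mmc_ranking_sequence} to express each adjacent popularity margin as a signed sum of the multiplicities $\rho_k$ over the threshold structure, deriving a contradiction from the resulting inequalities. You instead observe directly that $c_i\succ_v c_{i+1}$ iff $\peak{v}\in\{c_1,\dots,c_i\}$, so the adjacent margins $d_i=\pop(c_i,c_{i+1})$ are non-decreasing along the axis, and combine this with \cref{lem:mmc_worst_defeat} to get the closed form $\mmc(c_i)=\max(d_{i-1},-d_i)$; the valley shape of the score sequence then falls out of the max of a non-decreasing and a non-increasing sequence, and an outward merge from the valley bottom produces a single-peaked MMC output. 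This buys two things. First, it is quantitatively sharper and shorter: the monotone-margin lemma replaces the case analysis over \cref{eq:mmc_sp_preserving_3,eq:mmc_sp_preserving_4}. Second, your explicit handling of ties closes a step the paper glosses over: the paper infers a \emph{strict} dip in the scores from the assumption that all outputs violate single-peakedness, but on the level of abstract score sequences a plateau such as $(0,1,1,0)$ has no strict interior local maximum and yet admits no single-peaked realisation, so ``no strict dip'' alone would not suffice. Your valley-shape argument is exactly the stronger structural fact needed to exclude such plateaus, and it is correct as written (including the boundary conventions $d_0=-\infty$, $d_m=+\infty$).
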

\appendixproof{thm:mmc_sp_preserving}{
\begin{proof}
Let $c_1 \rhd \dots \rhd c_m$ be a single-peak axis and $\profile$ a profile that is single-peaked on this axis. Further, let $(r_j)_{j \in [2^{m-1}]}$ be the sequence of all single-peaked rankings w.r.t.\ this axis given by \Cref{lem:mmc_ranking_sequence}.
Towards a contradiction assume that under $\profile$ MMC only outputs rankings violating single-peakedness. Let $r \in \rrule_{\mmc}(\profile)$ be such a ranking violating single-peakedness. It therefore exists an $i \in [m]$ such that 
\begin{align}
    \mmc(c_{i-1}) &< \mmc(c_i) ~~~ \text{ and } \label{eq:mmc_SP_preserving_1}\\
    \mmc(c_{i+1}) &< \mmc(c_i). \label{eq:mmc_SP_preserving_2}
\end{align}
Let $h_{-1}$ be the threshold as described in \Cref{lem:mmc_ranking_sequence} for candidates $c_{i-1}$ and $c_i$, and $h_{+1}$ the one for $c_i$ and $c_{i+1}$. Similarly, we define $h_{-2}$ for $c_{i-2}$ and $c_{i-1}$, and $h_{+2}$ for $c_{i+1}$ and $c_{i+2}$. These thresholds satisfy $h_{-2} \leq h_{-1} \leq h_{+1} \leq h_{+2}$, since otherwise there would be at least one ranking in $(r_j)_{j \in [2^{m-1}]}$ that is in violation of the single-peaked axis.

Now consider the above two inequalities. \Cref{eq:mmc_SP_preserving_1} yields \[\max_{b \in C} \pop(b, c_{i-1}) < \max_{b \in C} \pop(b, c_i)\] which by \Cref{lem:mmc_worst_defeat} is equivalent to 
\begin{align*}
    \max &\left\{ \pop(c_{i-2}, c_{i-1}), \pop(c_{i}, c_{i-1}) \right\} \\
    &< \max \left\{ \pop(c_{i-1}, c_i), \pop(c_{i+1}, c_i) \right\}.
\end{align*}
This inequality is true if and only if any of the following two pairs of inequalities is true:
\begin{align}
    \begin{split}
        \pop(c_{i-2}, c_{i-1}) &< \pop(c_{i-1}, c_i) \text{ ~~~and~~~ } \label{eq:mmc_sp_preserving_3} \\
        \pop(c_{i}, c_{i-1}) &< \pop(c_{i-1}, c_i) 
    \end{split} \\
    \begin{split}
        \text{or~~~ } \pop(c_{i-2}, c_{i-1}) &< \pop(c_{i+1}, c_i) \text{ ~~~and~~~ } \\
        \pop(c_{i}, c_{i-1}) &< \pop(c_{i+1}, c_i).\label{eq:mmc_sp_preserving_4}
    \end{split}
\end{align}

For every $r_k \in (r_j)_{j \in [2^{m-1}]}$ let $\rho_k \geq 0$ be the number of occurrences of ranking $r_k$ in $\profile$. By using the definition of the thresholds in \Cref{lem:mmc_ranking_sequence} we know that for example
\begin{align*}
    \pop(c_{i-2}, c_{i-1}) &< \pop(c_{i-1}, c_i) \\ 
    \Leftrightarrow ~~~ \sum_{k=1}^{h_{-2}} \rho_k - \sum_{k=h_{-2}+1}^{2^{m-1}} \rho_k &< \sum_{k=1}^{h_{-1}} \rho_k - \sum_{k=h_{-1}+1}^{2^{m-1}} \rho_k.
\end{align*}
Using this as well as $h_{-2} \leq h_{-1} \leq h_{+1}$ we manipulate \Cref{eq:mmc_sp_preserving_3,eq:mmc_sp_preserving_4} as follows. First consider the latter inequality of  \Cref{eq:mmc_sp_preserving_4}. This is equivalent to
\[ \sum_{k = h_{-1}+1}^{2^{m-1}} \rho_k - \sum_{k = 1}^{h_{-1}} \rho_k < \sum_{k = h_{+1}+1}^{2^{m-1}} \rho_k - \sum_{k = 1}^{h_{+1}} \rho_k. \]
This can be reordered to
\[ \sum_{k = h_{-1}+1}^{h_{+1}} \rho_k < - \sum_{k = h_{-1}+1}^{h_{+1}} \rho_k, \]
which is a contradiction.
Thus \Cref{eq:mmc_sp_preserving_4} is not true and both inequalities in \Cref{eq:mmc_sp_preserving_3} have to be true. In particular, from the latter inequality of \Cref{eq:mmc_sp_preserving_3} we can obtain that
\[ \sum_{k = h_{-1}+1}^{2^{m-1}} \rho_k - \sum_{k = 1}^{h_{-1}} \rho_k < \sum_{k = 1}^{h_{-1}} \rho_k - \sum_{k = h_{-1}+1}^{2^{m-1}} \rho_k. \]
Reordering this inequality we conclude that 
\begin{equation}\label{eq:mmc_sp_preserving_case1}
    \sum_{k = h_{-1}+1}^{2^{m-1}} \rho_k < \sum_{k=1}^{h_{-1}} \rho_k.
\end{equation}
Similarly, \Cref{eq:mmc_SP_preserving_2} yields $\max_{b \in C} \pop(b, c_{i+1}) < \max_{b \in C} \pop(b, c_i)$ which by \Cref{lem:mmc_worst_defeat} is equivalent to 
\begin{align*}
    \max &\left\{ \pop(c_{i+2}, c_{i+1}), \pop(c_{i}, c_{i+1}) \right\} \\
    &< \max \left\{ \pop(c_{i-1}, c_i), \pop(c_{i+1}, c_i) \right\}.
\end{align*}
We again evaluate the different cases similarly as we did with \Cref{eq:mmc_SP_preserving_1} and by $h_{-1} \leq h_{+1} \leq h_{+2}$ obtain 
\[ \sum_{k=1}^{h_{+1}} \rho_k < \sum_{k = h_{+1}+1}^{2^{m-1}} \rho_k, \]
which is equivalent to
\[ \sum_{k=1}^{h_{-1}} \rho_k + \sum_{k=h_{-1}+1}^{h_{+1}} \rho_k < \sum_{k = h_{+1}+1}^{2^{m-1}} \rho_k. \]
If we plug in \Cref{eq:mmc_sp_preserving_case1} here we obtain
\begin{align*}
    \sum_{k=h_{-1}+1}^{2^{m-1}} \rho_k + \sum_{k=h_{-1}+1}^{h_{+1}} \rho_k &< \sum_{k=1}^{h_{-1}} \rho_k + \sum_{k=h_{-1}+1}^{h_{+1}} \rho_k \\
    &< \sum_{k = h_{+1}+1}^{2^{m-1}} \rho_k.
\end{align*}
But this leads to a contradiction as by $h_{-1} \leq h_{+1}$ and $\rho_k \geq 0$ for all $k$ we know that $\sum_{k=h_{-1}+1}^{2^{m-1}} \rho_k \geq \sum_{k = h_{+1}+1}^{2^{m-1}} \rho_k$.
\end{proof}
 }

\subsection{Other Rules}\label{sec: other results}
\appendixsection{sec: other results}
We show that many other ranking rules that do not coincide with the above-mentioned rules fail to preserve single-peakedness.
We refer the reader to the appendix for examples, proofs, and more details.
To summarise, we identify counterexamples for Dodgson's and Copeland's rule (with 5 candidates), Single Transferable Vote (with 3 candidates), and Borda's rule (with 4 candidates), which can be easily extended to examples with more candidates.

\begin{proposition}
	\sloppy
    Dodgson's, Copeland's and Borda's rule, and Single Transferable Vote do not preserve single-peakedness.
\end{proposition}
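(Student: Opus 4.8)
The plan is to establish each non‑preservation claim by an explicit single‑peaked counterexample: for each rule I exhibit a profile $\profile$ that is single‑peaked with respect to an axis $c_1 \rhd \dots \rhd c_m$ but for which \emph{every} ranking in $\rrule(\profile)$ contains an interior \emph{dip}, that is, an index $1 < i < m$ with $c_i$ ranked below both axis‑neighbours $c_{i-1}$ and $c_{i+1}$. Since a ranking is single‑peaked w.r.t.\ $\rhd$ precisely when it has no such dip, forcing one dip into every output ranking suffices. The numbers of candidates match the summary above: $4$ for Borda, $5$ for Copeland and Dodgson, and $3$ for Single Transferable Vote, and each example extends to more candidates by padding the axis with candidates that every voter ranks strictly below the relevant block.

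For the two score‑based rules, Borda and Copeland, I would reduce the statement to a purely numerical condition. Both rules output exactly the rankings that order candidates by non‑increasing score (with ties broken arbitrarily), so if some interior candidate $c_i$ receives a score that is \emph{strictly} smaller than the scores of both $c_{i-1}$ and $c_{i+1}$, then $c_i$ is forced below both neighbours in \emph{every} consistent output ranking, yielding the required dip irrespective of tie‑breaking. The construction goal is therefore to realise such a strict score valley inside a single‑peaked profile. This is exactly the configuration that \cref{thm:mmc_sp_preserving} forbids for the MMC score (where the analogous valley was contradicted via \cref{eq:mmc_sp_preserving_3,eq:mmc_sp_preserving_4}); the point of the counterexamples is that the Borda and Copeland scores lack that rigidity. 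For Copeland there is a conceptual subtlety worth flagging: by \cref{le: extension-of-median-voter-thm} the strict majority relation of a single‑peaked profile with an odd number of voters is a transitive tournament, so Copeland would then merely reproduce the single‑peaked Condorcet order. The counterexample must hence use an even electorate so that \emph{pairwise majority ties} arise, and it is Copeland's half‑point accounting of these ties that breaks monotonicity along the axis. For Borda no ties are needed, since Borda is positional rather than tournament‑based.

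For Dodgson's rule I would proceed identically: build a $5$‑candidate single‑peaked profile, compute the Dodgson scores (the minimum number of adjacent swaps making each candidate a Condorcet winner), and exhibit a candidate whose Dodgson score forces it strictly below both axis‑neighbours, giving a dip in every score‑consistent output. For Single Transferable Vote the rule is not score‑based, so instead of a score valley I would trace the round‑by‑round elimination/selection on a $3$‑candidate profile and read off the produced ranking, checking that the middle candidate $c_2$ ends below both $c_1$ and $c_3$. To keep this clean I would choose the vote counts so that no elimination tie ever occurs, making the STV output a single ranking and thereby removing any dependence on tie‑breaking.

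The main obstacle throughout is \emph{irresoluteness}: because all four rules may return several rankings, a valid counterexample has to defeat single‑peakedness for the whole output set, not just for one ranking. For Borda, Copeland, and Dodgson this is fully neutralised by demanding a \emph{strict} score valley, which pins the offending candidate's relative position in every tie‑consistent ranking; for STV it is neutralised by designing the instance to be tie‑free. Beyond the constructions themselves no further machinery is required, since each verification is a finite computation on a fixed small profile. The remaining work is bookkeeping, which I would present compactly as a per‑rule table of scores (popularity margins $\pop$ and the induced Borda/Copeland/Dodgson scores) and, for STV, a short trace of the elimination rounds.
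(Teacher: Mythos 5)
Your strategy coincides exactly with the paper's: for each rule the paper exhibits a small single-peaked profile whose \emph{entire} output set has an interior dip, extended to larger $m$ by padding the bottom of every ranking. Your structural analysis correctly anticipates every feature of those examples --- the candidate counts ($4$ for Borda, $5$ for Dodgson and Copeland, $3$ for STV), the use of a \emph{strict} score valley to defeat irresoluteness, the tie-free STV instance, and in particular the observation that Copeland needs an even electorate (with an odd number of voters the majority relation of a single-peaked profile is a transitive tournament whose order is itself single-peaked, so Copeland could not dip); the paper's Copeland example indeed has six voters and its dip is produced by the half-points for pairwise ties.

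The one genuine gap is that you never exhibit the profiles, and for this proposition the counterexamples \emph{are} the proof: the claim that a single-peaked profile with a strict Borda (or Copeland, or Dodgson) score valley exists is precisely what must be established, and it is not automatic --- \cref{thm:mmc_sp_preserving} shows the analogous valley is \emph{impossible} for the MMC score, so for the other scores existence requires exhibition rather than plausibility. For completeness, the paper's witnesses (all single-peaked w.r.t.\ $a\rhd b\rhd c\rhd d\rhd e$, truncated as appropriate) are: for Borda, $3\times(a \succ b \succ c \succ d)$ and $2\times (c \succ d \succ b \succ a)$, with scores $9,8,9,4$ and a forced dip at $b$; for Dodgson and Copeland, $1\times(a\succ b\succ c\succ d\succ e)$, $2\times(b\succ a\succ c\succ d\succ e)$, $2\times(d\succ e\succ c\succ b\succ a)$, $1\times(e\succ d\succ c\succ b\succ a)$, with Dodgson scores $6,3,4,3,6$ and Copeland scores $1.5,2.5,2,2.5,1.5$, both dipping at $c$; for STV, $1\times(a\succ b\succ c)$ and $2\times(c\succ b\succ a)$, which eliminates $b$ first and outputs $c\succ a\succ b$, dipping at $b$. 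With such profiles supplied your argument closes; without them it remains a (correct) proof plan rather than a proof.
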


However, we can show that on 3-candidate profiles Borda's rule preserves single-peakedness.
Copeland's rule with 3 candidates coincides with Kemeny's rule due to weak Condorcet loser and winner consistency.
It thus preserves single-peakedness on these profiles.
Note that any ranking on 2 candidates is single-peaked;
thus under the here discussed rules the only open cases are for Copeland's rule on profiles with 4 candidates and for Dodgson's rule with 3 or 4 candidates. 

We remark that weak Dodgson~\cite{Fishburn77}, in which one ranks candidates by the number of pairwise swaps needed to make a candidate a \emph{weak} Condorcet winner, does preserve single-peakedness.
We analyse the rule in \Cref{subsec: modification of dodgson}.

\toappendix{
\subsection{Dodgson}\label{subsec: dodgson}
Given a preference profile $\profile$, for each candidate $c$ we can compute its Dodgson score $\ds(c)$ w.r.t.\ $\profile$ as the minimum number of pairwise swaps in rankings of $\profile$ to make $c$ the strict Condorcet winner. If a candidate is already the strict Condorcet winner in $\profile$ its Dodgson score is 0.
Dodgson's rule takes as an input a preference profile $\profile = (C,V)$ and outputs a ranking of all $c \in C$ in non-increasing order w.r.t.\ $\ds(c)$.

The following example shows that Dodgson's rule does not preserve single-peakedness.
\begin{example}\label{ex: Dodgson not sp preserving}
    Consider the preference profile given by
    \begin{align*}
        1 \times ~~ &a \succ b \succ c \succ d \succ e \\
        2 \times ~~ &b \succ a \succ c \succ d \succ e \\
        2 \times ~~ &d \succ e \succ c \succ b \succ a \\
        1 \times ~~ &e \succ d \succ c \succ b \succ a.
    \end{align*}
    The resulting Dodgson scores are $\ds(a) = 6$, $\ds(b)=3$, $\ds(c) = 4$, $\ds(d) = 3$ and $\ds(e)=6$. 
    Thus, single-peakedness is violated by all existing Dodgson rankings in this instance since they all have a dip at $c$.
\end{example}
Since this example can be arbitrarily extended by adding more candidates always in the same order to the bottom of each ranking, it generally holds that Dodgson's rule is not preserving single-peakedness on preference profiles with $m\geq5$ candidates.
}

\toappendix{
\subsection{Weak Variant of Dodgson's Rule}\label{subsec: modification of dodgson}

Instead of considering the Dodgson score $\ds(c)$ w.r.t. a preference profile $\profile$, i.e. the minimum number of pairwise swaps in rankings of $\profile$ to make $c$ the \emph{strict} Condorcet winner, we consider a weak version $\mds(c)$ which is the minimum number of pairwise swaps in rankings of $\profile$ to make $c$ a \emph{weak} Condorcet winner. If a candidate is already a weak Condorcet winner in $\profile$ its weak Dodgson score is 0.
The weak Dodgson rule takes as an input a preference profile $\profile$ and outputs all rankings of candidates $c$ in non-increasing order w.r.t.\ $\mds(c)$. 

The following example shows that even on single-peaked preferences the weak Dodgson rule (as well as the original Dodgson rule) is not Condorcet loser consistent. This makes both rules different from Kemeny's rule.
\begin{example}\label{ex: Dodgson-not-CL-consistent}
    Consider the preference profile given by
    \begin{align*}
        1 \times ~~ &a \succ b \succ c \succ d \\
        2 \times ~~ &b \succ c \succ d \succ a.
    \end{align*}
    The resulting scores are $\ds(a) = \mds(a) = 3$, $\ds(b) = \mds(b) = 0$, $\ds(c) = \mds(c) = 2$ and $\ds(d) = \mds(d) = 4$. Thus candidate $d$ gets ranked last by the (weak) Dodgson rule even though $a$ is the strict Condorcet loser.
\end{example}

It is possible to show that the weak Dodgson rule preserves single-peakedness.
Using this we can also show that weak Dodgson is extremist majority consistent. 

\begin{proposition}\label{prop: Dodgson-preserves-sp}
Weak Dodgson perserves single-peakedness.
\end{proposition}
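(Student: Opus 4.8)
The plan is to reduce single-peakedness preservation to a shape property of the weak Dodgson score and then to establish that property from the margin structure of single-peaked profiles. A ranking that orders the candidates by their weak Dodgson score (smallest score, i.e.\ the weak Condorcet winners, on top) is single-peaked with respect to~$\rhd$ exactly when every top-set of the ranking is a contiguous interval of the axis; since the rule outputs \emph{all} such orderings, it preserves single-peakedness iff for every threshold $\theta$ the sublevel set $\{c_i : \mds(c_i) \le \theta\}$ is an interval of $c_1 \rhd \dots \rhd c_m$. Equivalently, it suffices to show that $i \mapsto \mds(c_i)$ is ``valley-shaped'': non-increasing up to a flat minimum and non-decreasing afterwards. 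Given such a shape, the sublevel sets are nested intervals and one can break the ties among equal scores so that each prefix stays an interval, yielding a single-peaked output ranking.

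Next I would locate the flat minimum. The candidates with $\mds = 0$ are exactly the weak Condorcet winners, which by the Median Voter Theorem (\cref{le: extension-of-median-voter-thm}) form a contiguous set $W$ of the axis. Fix a candidate $c_i$ strictly to the left of $W$ (the right side is symmetric). Because the median peak lies to the right of $c_i$, candidate $c_i$ already weakly beats everything to its left: its worst left-defeat is against $c_{i-1}$ by \cref{lem:mmc_worst_defeat}, and that margin is non-positive. Hence making $c_i$ a weak Condorcet winner only requires repairing its comparisons against $c_{i+1}, \dots, c_m$.

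The core step is the exact score formula
\[
  \mds(c_i) = \sum_{j>i} \left\lceil \tfrac{1}{2}\max\{0,\pop(c_j,c_i)\} \right\rceil,
\]
where by single-peakedness the margins $\pop(c_j,c_i)$ are non-increasing in $j$, so only finitely many summands are positive. For the lower bound, a single adjacent swap changes $\pop(c_j,c_i)$ for at most one index $j$, and only when it directly exchanges $c_i$ with $c_j$; driving $\pop(c_j,c_i)$ down to~$0$ therefore needs at least $\lceil \tfrac12\max\{0,\pop(c_j,c_i)\}\rceil$ swaps, and the swap-sets for distinct $j$ are disjoint. For the upper bound I would use that, in any voter ranking $c_j$ above $c_i$, single-peakedness forces the candidates ranked above $c_i$ to be a prefix $c_{i+1},\dots,c_{i+r_v}$ of the axis to the right of $c_i$; pushing $c_i$ upward past the first $t_v$ of them costs $t_v$ swaps and flips exactly those comparisons. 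Since the required flip-counts are non-increasing in the index, the $t_v$ can be chosen so that every deficit is met at total cost equal to the claimed sum.

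Finally, valley-shapedness follows from margin monotonicity: for $j>i+1$ we have $\pop(c_j,c_{i+1}) \le \pop(c_j,c_i)$ because $c_{i+1}$ is closer to $c_j$ on the axis than $c_i$ is. Thus for two candidates left of $W$,
\[
  \mds(c_{i+1}) = \sum_{j>i+1}\!\left\lceil \tfrac12\max\{0,\pop(c_j,c_{i+1})\}\right\rceil \le \sum_{j>i}\!\left\lceil \tfrac12\max\{0,\pop(c_j,c_i)\}\right\rceil = \mds(c_i),
\]
so $\mds$ is non-increasing up to $W$, symmetrically non-decreasing after $W$, and $0$ on $W$. This is precisely the valley shape, giving interval sublevel sets and hence a single-peaked output. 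I expect the main obstacle to be the upper-bound (achievability) half of the score formula: arguing that the greedy ``push $c_i$ up through its contiguous block of better-ranked right neighbours'' is simultaneously feasible (enough voters rank each $c_j$ above $c_i$) and optimal, with the per-voter choices nesting consistently with the non-increasing deficits; the lower bound and the margin monotonicity are comparatively routine consequences of single-peakedness and \cref{lem:mmc_worst_defeat}.
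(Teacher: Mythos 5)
Your route is genuinely different from the paper's. The paper proves the proposition with a short local contradiction argument: if the weak Dodgson order had an unresolvable dip at $c_i$, then (ruling out first that a majority prefers $c_{i-1}$ to $c_i$, since by single-peakedness such a majority would also prefer $c_i$ to everything to its right and force $\mds(c_i)<\mds(c_{i+1})$) strict majorities must prefer $c_i$ to both axis-neighbours, hence by single-peakedness to every candidate, making $c_i$ a strict Condorcet winner with $\mds(c_i)=0$ --- contradicting the dip. No closed form for the score is needed. Your reduction to interval sublevel sets, the localisation of the zero set via \cref{le: extension-of-median-voter-thm}, and the margin monotonicity $\pop(c_j,c_{i+1})\le\pop(c_j,c_i)$ are all correct, and if completed your argument yields strictly more than the proposition (an exact formula for $\mds$ on single-peaked profiles); the price is that everything now hinges on that formula.

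The gap is the achievability (upper-bound) half of $\mds(c_i)=\sum_{j>i}\lceil\tfrac12\max\{0,\pop(c_j,c_i)\}\rceil$, which you flag but do not close, and your sketch of it is flawed: pushing $c_i$ up by $t_v$ positions in voter $v$'s ranking flips its comparisons with the $t_v$ \emph{lowest-ranked} members of the block $\{c_{i+1},\dots,c_{i+r_v}\}$, not with ``the first $t_v$ of them'' in axis order. The block is itself single-peaked with $v$'s peak possibly in its interior, so the candidate directly above $c_i$ may be $c_{i+r_v}$ rather than $c_{i+1}$, and a partial push can waste swaps on candidates beyond the last one with positive margin. A repair that does work: write $D_j:=\lceil\tfrac12\max\{0,\pop(c_j,c_i)\}\rceil$ and let $n_r$ be the number of voters whose strict upper contour set of $c_i$ is exactly $\{c_{i+1},\dots,c_{i+r}\}$. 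Nestedness of these contour sets gives $\pop(c_{i+r},c_i)-\pop(c_{i+r+1},c_i)=2n_r$, so all these margins share a parity and $e_r:=D_{i+r}-D_{i+r+1}$ satisfies $0\le e_r\le n_r$. Pushing $c_i$ to the \emph{top} of its block in $e_r$ such voters, for each $r$, flips exactly $\sum_{r\ge s}e_r=D_{i+s}$ comparisons against each $c_{i+s}$ at total cost $\sum_r r\,e_r=\sum_s D_{i+s}$, with no wasted swap. Without this (or an equivalent) argument the upper bound, and hence your final chain of inequalities establishing the valley shape, is unproven.
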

\begin{proof}
Let $c_1 \rhd \cdots \rhd c_m$ be a single-peak axis and $\profile$ a profile that is single-peaked on this axis.
Let~$\succeq_D$ be the order relation on the candidates based on their weak Dodgson scores.
The statement is trivial for $m = 1,2$, since any preference order is single-peaked over one or two candidates. Hence, we consider a profile with $m \geq3$ from now on.
Suppose $\succeq_D$ has a dip with lowest point $c_i$ such that there exists no tie breaking rule that can resolve this dip, i.e., $c_{i-1} \succ_D c_i$ and $c_{i+1} \succ_D c_i$. 

Assume a majority of the voters supports $c_{i-1} > c_i$, and let us call these voters $M \subseteq N$. Because of single-peakedness, the voters $M$ must then also support $c_i > c_l$ for $l = i+1, \dots, m$ and in particular $c_i > c_{i+1}$.
Because $c_{i+1} \succ_D c_i$, $\mds(c_i)\neq 0$ and there must exist at least one candidate that beats $c_i$ by a strict majority in pairwise comparisons. We call the set of such candidates $C'$.
Since $c_i$ beats all $c_l$ with $l = i+1, \dots, m$ by a majority, $C' \subseteq \{c_1, \dots, c_{i-1}\}$. 
But then the set of voters (a strict majority) that prefer any $c' \in C'$ over $c_i$ must also prefer $c' > c_i > c_{i+1}$ because of single-peakedness.
These $C'$ are the only candidates that $c_i$ must beat to become a weak Condorcet winner, but also $c_{i+1}$ has to beat these to become a weak Condorcet winner and needs strictly more pairwise swaps than $c_i$ to do so. Thus $\mds(c_i) < \mds(c_{i+1})$ which is a contradiction to $c_{i+1} \succ_D c_i$.
Thus, a strict majority of voters must support $c_i > c_{i-1}$. Furthermore, because of single-peakedness, the same strict majority of voters must also prefer $c_i > c_h$ for all $h = 1, \dots, i-1$.

By symmetry the same is true for $c_i$ and $c_{i+1}$, i.e., a strict majority of voters must support $c_i > c_{i+1}$. Again because of single-peakedness, the same strict majority of voters must also prefer $c_i > c_h$ for all $h = i+1, \dots, m$.

This means that $c_i$ strictly dominates all other candidates in pairwise majority comparisons, i.e., $c_i$ is a strict Condorcet winner and $\mds(c_i)=0$, which is a contradiction.
   \end{proof}
\begin{proposition}
The weak Dodgson rule is extremist majority consistent.
\end{proposition}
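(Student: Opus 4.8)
The plan is to prove both directions of the extremist--majority--consistency characterisation for the ``up'' extreme $r^{\uparrow}$, i.e.\ that weak Dodgson outputs (and the tie-breaking adopts) $r^{\uparrow}$ if and only if a weak majority of the voters hold $r^{\uparrow}$; the case of $r^{\downarrow}$ is identical after reversing the axis $c_1 \rhd \dots \rhd c_m$, so I would treat only $r^{\uparrow}$ and invoke this symmetry. The backbone is a chain of equivalences that I would set up first. Since $c_1$ is the leftmost candidate, the unique single-peaked ranking with $c_1$ on top is exactly $r^{\uparrow}$, so ``a voter holds $r^{\uparrow}$'' means ``a voter has peak $c_1$''. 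By the Median Voter Theorem (\cref{le: extension-of-median-voter-thm}), $c_1$ is a weak Condorcet winner precisely when $c_1$ is a median peak, which for the leftmost candidate happens exactly when at least half of the voters have peak $c_1$, i.e.\ when a weak majority hold $r^{\uparrow}$. Finally, $c_1$ is a weak Condorcet winner if and only if $\mds(c_1)=0$, directly from the definition of the weak Dodgson score. So the target reduces to: $r^{\uparrow}$ is the single-peaked output of weak Dodgson if and only if $\mds(c_1)=0$.

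For the easy direction (output $\Rightarrow$ majority) I would argue contrapositively on the scores. If a weak majority do \emph{not} hold $r^{\uparrow}$, then $c_1$ is not a weak Condorcet winner and $\mds(c_1)>0$; on the other hand a weak Condorcet winner always exists in the single-peaked domain and has weak Dodgson score $0$, so some candidate strictly undercuts the score of $c_1$. As weak Dodgson ranks by nondecreasing $\mds$, no output ranking places $c_1$ first; in particular $r^{\uparrow}$, which does place $c_1$ first, is not output. This settles ``$r^{\uparrow}$ adopted $\Rightarrow$ weak majority hold $r^{\uparrow}$''.

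The substantial direction is the converse. Assuming a weak majority hold $r^{\uparrow}$, we have $\mds(c_1)=0$ as a global minimum, and I must upgrade ``$c_1$ can be placed first'' to ``the whole ranking $r^{\uparrow}$ is produced''. The clean way uses the single-peakedness preservation of weak Dodgson (\cref{prop: Dodgson-preserves-sp}): its proof in fact shows that the scores $\mds(c_1),\dots,\mds(c_m)$ have no interior strict ``dip'' along the axis, since a point strictly below both neighbours would force that candidate to be a strict Condorcet winner, a contradiction. A score sequence with no interior dip and with its global minimum at the left endpoint $c_1$ is nondecreasing along the axis, so $c_1 \succ \dots \succ c_m = r^{\uparrow}$ is a valid weak Dodgson output; being single-peaked it survives the single-peakedness restriction, giving $r^{\uparrow}\in\rrule(\profile)\cap\mathcal{L}_{sp}^\rhd(C)$. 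When the majority is strict, $c_1$ is the strict Condorcet winner and the unique score-$0$ candidate, so every output ranks $c_1$ strictly first; since $r^{\uparrow}$ is the only single-peaked ranking topped by $c_1$, the single-peaked winning set is exactly $\{r^{\uparrow}\}$ and $r^{\uparrow}$ is adopted irrespective of tie-breaking.

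I expect the main obstacle to be the boundary case of an exact weak majority, where exactly half the voters have peak $c_1$: then $c_1$ shares score $0$ with a whole median block of weak Condorcet winners and $r^{\uparrow}$ is no longer the \emph{unique} single-peaked winner. Here I would lean on the tie-breaking rule that favours the single-peaked winner closest in Kendall's tau distance to the active voter's current opinion: an agent already holding $r^{\uparrow}$ sits at distance $0$ from $r^{\uparrow}$, hence keeps it and is stable, which is exactly what the greedy spreading argument requires. I would also double-check that the ``no interior dip'' claim is robust to ties among scores (the tie-aware version already appears in the proof of \cref{prop: Dodgson-preserves-sp}), as that is the one place where a careless argument could break.
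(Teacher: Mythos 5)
Your ``output $\Rightarrow$ majority'' direction is fine and is essentially the paper's argument read contrapositively: if $r^\uparrow$ is output then $c_1$ has minimal, hence zero, weak Dodgson score, hence is a weak Condorcet winner, hence at least half the voters rank $c_1\succ c_2$, and by single-peakedness exactly those voters hold $r^\uparrow$. The gap is in the converse, and it sits precisely where you said you would ``double-check'': the inference ``a score sequence with no interior dip and with its global minimum at the left endpoint is nondecreasing along the axis'' is false. The no-dip property extracted from the proof of \cref{prop: Dodgson-preserves-sp} only excludes a candidate whose score is \emph{strictly} larger than both neighbours' scores; it does not exclude plateaus acting as local maxima. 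The integer score pattern $(0,2,2,1,3)$ along the axis has no interior strict dip and attains its minimum at $c_1$, yet it is not nondecreasing, and with such scores $r^\uparrow$ would place $c_3$ (score $2$) above $c_4$ (score $1$) and hence would \emph{not} be a weak Dodgson output. Ties among scores are not a corner case to be checked later: the entire difficulty of this direction is the exact-half situation, where $c_1$ is only a weak Condorcet winner and shares score $0$ with a whole block of candidates. (Your strict-majority sub-argument is sound, since there $c_1$ is the unique score-$0$ candidate, every output tops $c_1$, and the only single-peaked such ranking is $r^\uparrow$.)

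The paper closes the tied case by a different, concrete device: it takes an actual single-peaked weak Dodgson output $r$ (whose existence is exactly the content of \cref{prop: Dodgson-preserves-sp}), lets $c_i$ be its peak, and notes that every candidate ranked above $c_1$ in $r$ must have score $0$ because $\mds(c_1)=0$ is minimal among all scores. By the interval structure of single-peaked rankings, the candidates above $c_1$ in $r$ are exactly $c_2,\dots,c_{j-1}$ for some $j>i$, and the candidates below $c_1$ appear as $c_j\succ_r c_{j+1}\succ_r\dots\succ_r c_m$, i.e., in axis order with nondecreasing scores. Consequently $\mds(c_1)\le\mds(c_2)\le\dots\le\mds(c_m)$ after all, so $r^\uparrow$ orders the candidates by nondecreasing score and is a valid output (the paper phrases this as reversing the score-$0$ block of $r$ to bring $c_1$ to the top). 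Your plan is repairable, but only by replacing the abstract ``no dip plus left minimum implies monotone'' step with this analysis of a concrete single-peaked output ranking; as written, that step would fail.
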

\begin{proof}
We only consider the case for $r^\uparrow$ here, as the case for $r^\downarrow$ follows by symmetry.
Suppose a weak majority  of voters in $\profile$ has opinion $r^{\uparrow}$, then $c_1$ has a weak Dodgson score of $0$. However, $c_1$ might only be a weak Condorcet winner and some other candidates $C'$ also have score $0$. Let $r$ be a single-peaked weak Dodgson ranking (which exists by Proposition~\ref{prop: Dodgson-preserves-sp}). Then the top candidates in the ranking must be candidates $C'$ and suppose candidate $c_i$ is top ranked. Then also $c_1, \dots, c_{i-1} \in C'$ and $c_1 \succ_r \dots \succ_r c_{i-1}$. We can now modify the ranking $r$ by reversing the order over $c_1, \dots, c_{i}$ to gain order $r'$.  $r'$ is a Dodgson ranking, because $c_1, \dots, c_{i}$ all have a Dodgson score of $0$.
Furthermore, because $r$ is single-peaked, also $r'$ must be single-peaked.
However, the only single-peaked order with $c_1$ at the top is $r^{\uparrow}$ and thus $r^{\uparrow}$ is a single-peaked weak Dodgson ranking.

Suppose $r^{\uparrow}$ is a weak Dodgson ranking. Then $c_1$ is one of the candidates with minimal weak Dodgson score. Since $\profile$ is single-peaked there exists a weak Condorcet winner. This candidate has to have minimal weak Dodgson score. Since $c_1$ therefore also must have score 0 it must be a weak Condorcet winner in $\profile$. This means $c_1$ is at least as good as all other candidates in pairwise comparison. The only preference ranking in which $c_1$ beats $c_2$ is $r^{\uparrow}$. Thus there must exist at least a weak majority of voters in $\profile$ with preference $r^{\uparrow}$.
   \end{proof}
}

\toappendix{
\subsection{Copeland}\label{subsec: copeland}

 For a given preference profile $\profile$, the Copeland score of a candidate $c$ is defined as the number of candidates that $c$ beats in a pairwise majority comparison plus half the number of candidates with whom $c$ has a tie, i.e., $\cs(c) = |\{x \in C \mid c \succ_M x\}| + \frac{1}{2}|\{x \in C \mid c \equiv_M x\}|$ where $\succeq_M$ is the majority preference relation of voters in $\profile$. Copeland's rule ranks candidates in non-increasing order of their Copeland scores $\cs$.

\Cref{ex: Dodgson not sp preserving}, by which we showed that Dodgson's rule is not preserving single-peakedness, can also be used to show that Copeland's rule does not preserve single-peakedness. Here, the Copeland scores are $\cs(a) = 1.5$, $\cs(b)=2.5$, $\cs(c) = 2$, $\cs(d) = 2.5$ and $\cs(e)=1.5$. 
    Thus, single-peakedness is violated by all existing Copeland rankings, since they have a dip at $c$. 
This example can be arbitrarily extended by adding more candidates always in the same order to the bottom of each ranking. Thus, it generally holds that Copeland's rule is not preserving single-peakedness on preference profiles with $m\geq5$ candidates.

Since Copeland's rule is weak Condorcet winner and weak Condorcet loser consistent (which always exist in a single-peaked profile), it coinciades with Kemeny's rule on preference profiles over $\leq 3$ candidates and the results of \Cref{subsec: kemeny sp preserving} apply. }

\toappendix{
\subsection{STV}\label{subsec: STV}
STV (Single Transferable Vote) is known as a single winner rule which iteratively eliminates the plurality loser. One easy extension to a ranking rule is to output the reverse order by which candidates are eliminated. Here, ties can be broken based on some tie breaking rule. The following example shows that this easy interpretation of STV as a ranking rule is not preserving single-peakedness.

\begin{example}\label{ex: STV not sp preserving}
    Consider the preference profile given by
    \begin{align*}
        1 \times ~~ &a \succ b \succ c \\
        2 \times ~~ &c \succ b \succ a. 
    \end{align*}
	STV first eliminates~$b$ with plurality score~$0$, then~$a$ (plurality score~$2$), which leaves~$c$ as a winner.
	The resulting order over the candidates is $c \succ a \succ b$.
	The preference profile is single-peaked only w.r.t.\ the axis~$a\rhd b\rhd c$ (and its reverse), but the output ranking is not.
\end{example}

Since this example can be arbitrarily extended by adding more candidates always in the same order to the bottom of each ranking, it generally holds that STV is not preserving single-peakedness on preference profiles with $m\geq3$ candidates.
}

\toappendix{
\subsection{Borda}\label{subsec: borda}
Borda's ranking rule is a positional scoring rule for which candidates receive scores according to their positions in the rankings of the voters. These scores then give a natural ranking over the candidates. More specifically, for a preference profile $\profile=(C,V)$ over $m$ candidates the Borda score of candidate $c$ is given by $B(c) = \sum_{v \in V} \sum_{c'\in C\setminus \{c\}} \mathds{I}(c\succ_v c')$.

The following example shows that the Borda ranking of a profile with only single-peaked input preferences is not necessarily single-peaked again.

\begin{example}\label{ex: Borda-not-sp-preserving-m=4}
    We consider the following preference profile:
    \begin{tabbing}
	    \hspace{1cm}\= $3\times$ \hspace{0.5cm}\= $a \succ b \succ c \succ d$ \\
        \> $2\times$ \> $c \succ d \succ b \succ a$.
    \end{tabbing}
	The Borda scores of the four candidates are: $B(a)=9$, $B(b)=8$, $B(c)=9$, $B(d)=4$, hence the Borda rankings are $\{c \succ a \succ b \succ d, a \succ c \succ b \succ d\}$.
	Note that this profile is single-peaked only w.r.t.\ the axis $a\rhd b\rhd c\rhd d$ (and its inverse), but the output rankings are not single-peaked w.r.t.\  this axis.
\end{example}

This example can be extended to any~$m>4$ by adding more candidates always in the same order to the bottom of each ranking, it generally holds that Borda's rule is not preserving single-peakedness on preference profiles with $m\geq4$ candidates.
However, for $m=3$ we can show that Borda's rule preserves single-peakedness. 
\begin{lemma}
	\label{lem:boada-sp-3}
	On preference profiles with at most~$3$ candidates, Borda's rule preserves single-peakadness.
\end{lemma}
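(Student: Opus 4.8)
The plan is to first dispatch the trivial cases and then reduce the three-candidate case to a single inequality on Borda scores.

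For $m \le 2$ every ranking is single-peaked, so the claim holds vacuously, and I would assume $m = 3$ with axis $c_1 \rhd c_2 \rhd c_3$. The single-peaked rankings w.r.t.\ this axis are exactly the four orders in which the middle candidate $c_2$ is not ranked last, namely $c_1 \succ c_2 \succ c_3$, $c_2 \succ c_1 \succ c_3$, $c_2 \succ c_3 \succ c_1$, and $c_3 \succ c_2 \succ c_1$; the only two non-single-peaked orders, $c_1 \succ c_3 \succ c_2$ and $c_3 \succ c_1 \succ c_2$, both place $c_2$ last. Since Borda's output is the set of all linear orders consistent with the weak order induced by the Borda scores, and a consistent order is single-peaked precisely when $c_2$ is not its last element, the rule fails to preserve single-peakedness on such a profile if and only if $c_2$ is the unique strict Borda loser, i.e.\ $B(c_2) < B(c_1)$ and $B(c_2) < B(c_3)$.

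Hence it suffices to rule this out. I would group the voters by their (single-peaked) ranking: let $a, b, d, e$ be the numbers of voters with rankings $c_1 \succ c_2 \succ c_3$, $c_2 \succ c_1 \succ c_3$, $c_2 \succ c_3 \succ c_1$, and $c_3 \succ c_2 \succ c_1$, respectively. A direct count of positions gives $B(c_1) = 2a + b$, $B(c_2) = a + 2b + 2d + e$, and $B(c_3) = d + 2e$, from which
\[ 2\,B(c_2) - B(c_1) - B(c_3) = 3b + 3d \ge 0. \]
Thus $B(c_2) \ge \tfrac12\bigl(B(c_1) + B(c_3)\bigr) \ge \min\{B(c_1), B(c_3)\}$, so $c_2$ can never be strictly below both $c_1$ and $c_3$; in particular it is not the unique strict Borda loser, contradicting the failure condition above. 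This proves the claim.

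The only slightly delicate point, and the place I would be most careful, is the reduction in the second paragraph: one must argue that whenever $c_2$ is not the unique strict minimum of the Borda scores, at least one consistent (tie-broken) output order keeps $c_2$ out of last place and is therefore single-peaked. This is a short case distinction: if $c_2$ is strictly above some candidate it is never last, and if $c_2$ is merely tied for the minimum with another candidate one can break the tie so as to push that other candidate to the bottom. The score computation itself is routine once the four voter types are fixed, so the conceptual content lies entirely in the characterisation of the non-single-peaked orders together with this tie-breaking observation.
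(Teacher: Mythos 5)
Your proof is correct and follows essentially the same route as the paper's: parametrise the profile by the multiplicities of the four single-peaked voter types, compute the Borda scores explicitly, and conclude that the middle candidate of the axis can never be forced into last place. The only difference is cosmetic but pleasant --- where the paper runs a two-case analysis (if $B(c_1)>B(c_2)$ then $B(c_2)>B(c_3)$, and symmetrically), you condense everything into the single identity $2B(c_2)-B(c_1)-B(c_3)=3b+3d\ge 0$, and your explicit tie-breaking remark at the end correctly handles the irresoluteness that the paper's phrasing leaves implicit.
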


\begin{proof}
	This result trivially holds for~$m\le2$ candidates.
	For~$m=3$ consider the following profile~$\profile$ that includes all possible single-peaked rankings for the candidates~$a$, $b$, and~$c$ w.r.t.\ axis~$a\rhd b\rhd c$.
	Here, $r_1, \dots, r_4 \geq 0$ signify the number of voters with the corresponding preferences.
	\begin{tabbing}
		\hspace{1cm}\= $r_1\times$ \hspace{0.5cm}\= $a \succ b \succ c$ \\
		\> $r_2\times$ \> $b \succ a \succ c$\\
		\> $r_3\times$ \> $b \succ c \succ a$\\
		\> $r_4\times$ \> $c \succ b \succ a$\\
	\end{tabbing}
	Then the Borda scores of the candidates are:
	\begin{itemize}
		\item[] $B(a) = 2r_1 + r_2$,
		\item[] $B(b) = r_1 + 2r_2 + 2r_3 + r_4$, and
		\item[] $B(c) = r_3 + 2r_4$.
	\end{itemize}
	Suppose $B(a)>B(b)$, then 
	\begin{align*}
		2r_1 + r_2 &> r_1 + 2r_2 + 2r_3 + r_4 \\
		r_1 &> r_2 + 2r_3 + r_4 .
	\end{align*}
	In consequence,
	\begin{align*}
		B(b)= r_1 + 2r_2 + 2r_3 + r_4 &> 3r_2 + 4r_3 + 2r_4 \\
		&> r_3 + 2r_4 = B(c),
	\end{align*} i.e., $B(a)>B(b)>B(c)$.

	Similarly, if $B(c)>B(b)$ (i.e., $r_4 > r_1 + 2r_2 + r_3$) then $B(c)>B(b)>B(a)$.
	Thus, the outcome Borda rankings in $\rrule(\profile)$ are either $a \succ b \succ c$ or $c \succ b \succ a$ or respect $b \succ c$ and $b \succ a$.
	Hence, Borda's rule outputs a single-peaked preference ranking for any preference profile $\profile$ with only single-peaked preference rankings over 3 candidates as input.
\end{proof}

	However, it turns out that even with~$3$ candidates Borda's rule is not extremist majority consistent.
	The following example illustrates our above observation that the Borda ranking is $a \succ b \succ c$ if and only if $r_1 > r_2 + 2r_3 + r_4$.
	That is, even if the majority opinion is one of the extreme opinions, this opinion is not necessarily a Borda ranking.

	\begin{example}\label{ex: Borda-not-extrem-majority-endorsing}
		We consider the following preference profile $\profile$:
		\begin{tabbing}
			\hspace{1cm}\= 4: \hspace{0.5cm}\= $a \succ b \succ c$ \\
			\> 1: \> $b \succ a \succ c$ \\
			\> 2: \> $b \succ c \succ a$
		\end{tabbing}
		The Borda scores of the candidates are: $B(a)=9$, $B(b)=10$, $B(c)=2$, such that the Borda ranking of $\profile$ is $b \succ a \succ c$ which differs from the majority opinion of $a \succ b \succ c$. 
	\end{example}
}

\section{Convergence of Single-Peaked Updates}\label{sec: convergence}
As we identified in the previous section, that only Kemeny's rule (or its equivalents on the single-peaked domain) and MMC preserve single peakedness, we consider the convergence of diffusion processes with these two rules as update rules on single-peaked opinions.  We show first for Kemeny updates and then for MMC updates (with tie-breaking as described in \cref{sec: prelim}) that any sequence of updates is finite, i.e., ends in a stable state after finitely many updates.

\subsection{Convergence of Kemeny Updates}

In the following, we outline how a stable state can be found in any network with single-peaked voter preferences that use Kemeny's rule as an update rule (while maintaining single-peakedness and using the above mentioned tie-breaking rule).

\begin{theorem}
\label{lem:kemeny-convergence}
Let~$G=(V, E)$ be a network with single-peaked preference profile~$\profile=(C, V)$.
Then any update sequence using Kemeny's rule is of length at most~$|E|\cdot\binom{|C|}{2}$.
That is, Kemeny's rule converges.
\end{theorem}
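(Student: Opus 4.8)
The plan is to exhibit a non-negative, integer-valued potential that is bounded by $|E|\cdot\binom{|C|}{2}$ and strictly decreases with every genuine update; the length bound then follows immediately. I would take the total edge disagreement
\[
  \Phi \;=\; \sum_{\{u,w\}\in E} \dist(\succ_u,\succ_w),
\]
i.e.\ the sum of Kendall's tau distances across all edges. Since every pairwise distance lies in $\{0,1,\dots,\binom{|C|}{2}\}$, we have $0\le\Phi\le |E|\cdot\binom{|C|}{2}$ at all times, and $\Phi$ is integer-valued.

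First I would record that all opinions remain single-peaked w.r.t.\ the fixed axis $\rhd$ throughout the process: the initial profile is single-peaked, and by \Cref{obs:kemeny-sp} the set of single-peaked Kemeny rankings of any neighbourhood profile $\profile[N(v)]$ (itself single-peaked as a sub-profile) is non-empty, so the tie-breaking rule of \Cref{sec: prelim}, which first restricts to single-peaked rankings, always returns a single-peaked ranking. Hence at every step the active voter's current opinion $\succ_v$ lies in $\mathcal{L}_{sp}^\rhd(C)$.

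Next I would analyse a single update in which an active voter $v$ replaces $\succ_v$ by a new ranking $r \in \rrule(\profile[N(v)])$. Only edges incident to $v$ change, and $\sum_{u\in N(v)}\dist(\cdot,\succ_u)$ is exactly the Kemeny score w.r.t.\ $\profile[N(v)]$, so
\[
  \Delta\Phi \;=\; \dist(r,\profile[N(v)])-\dist(\succ_v,\profile[N(v)]).
\]
Because $r$ is a Kemeny ranking of $\profile[N(v)]$ it attains the minimum Kemeny score, giving $\Delta\Phi\le0$ always. The crux is to upgrade this to a strict decrease whenever $v$ is not stable. Here I would use the two-level tie-break: if $\succ_v$ were itself a Kemeny ranking of $\profile[N(v)]$ then, being single-peaked, it would survive the single-peakedness restriction, and since it has Kendall's tau distance $0$ to itself it would be the unique minimiser of the secondary tie-break and hence be selected, forcing $r=\succ_v$ and making $v$ stable. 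Contraposing, a genuine update ($r\neq\succ_v$) implies $\succ_v$ is not a Kemeny ranking, i.e.\ $\dist(\succ_v,\profile[N(v)])>\dist(r,\profile[N(v)])$; as both sides are integers, $\Delta\Phi\le-1$.

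Finally, since $\Phi$ starts at most at $|E|\cdot\binom{|C|}{2}$, drops by at least $1$ with each update, and stays non-negative, any update sequence has length at most $|E|\cdot\binom{|C|}{2}$, so no update cycle exists and Kemeny's rule converges. I expect the only delicate point to be the strict-decrease step: one must invoke precisely the two-level tie-breaking (single-peakedness first, then distance to the current opinion) to rule out a voter ``updating'' to a distinct ranking of equal Kemeny score, and one must have already established that opinions never leave $\mathcal{L}_{sp}^\rhd(C)$ so that the current opinion is a legitimate candidate in the tie-break.
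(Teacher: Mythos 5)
Your proof is correct and follows essentially the same route as the paper: the same edge-disagreement potential $\sum_{\{u,w\}\in E}\dist(\succ_u,\succ_w)$, the same observation that a Kemeny update cannot increase it, and the same use of the tie-breaking rule to force a strict (hence integer) decrease on any genuine update. Your explicit remark that opinions stay in $\mathcal{L}_{sp}^\rhd(C)$ so that the current opinion is a legitimate tie-break candidate is a small but welcome addition of rigor over the paper's terser version.
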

\begin{proof}
	We show that any update to~$G$ decreases the sum~$\sum_{\{u, v\} \in E} \kt(\succ_u, \succ_v)$ of the distances by at least~$1$.
	Consider an update by Kemeny's rule for a voter~$v \in V$ with opinion $\succ_v$ and let~$r$ be the newly obtained opinion of~$v$, i.e., the Kemeny ranking of the profile $\profile[N(v)]$ of $v$'s neighbours in $G$.
	Let~$\kappa := \sum_{u \in N(v)} \kt(\succ_u, \succ_v)$ and let~$\kappa^* := \sum_{u \in N(v)} \kt(\succ_u, r)$.
	As~$r$ is an opinion minimi\-sing Kendall's tau distance to the opinions of~$N(v)$, we have~$\kappa^* \le \kappa$.
	Further, only~$v$ changes its opinion, so the sum of Kendall's tau distances cannot increase.
	But if it does not change, then the opinion~$\succ_v$ is used as a tie breaker, and we have~$r =\,\, \succ_v$, that is, $v$ does not update its opinion.
	Lastly, no update sequence under Kemeny can have more than~$|E| \cdot \binom{|C|}{2}$ updates as for any two rankings~$r, r'$ we have~$\kt(r, r') \le \binom{|C|}{2}$.
\end{proof}

For a general complexity result, we show that we can efficiently compute an update based on Kemeny's rule.
For this, note that one of the ends of the single-peaked axis is always among the Condorcet losers.

\begin{lemma}
	\label{lem:kemeny-time}
	Computing an update step with Kemeny's rule takes~$O(|V|\cdot |C|^2)$ time.
\end{lemma}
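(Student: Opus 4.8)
The plan is to turn the inductive argument behind \Cref{obs:kemeny-sp} into an explicit bottom-up algorithm and then bound its running time. The update for the active voter $v$ requires a single-peaked Kemeny ranking of the neighbourhood profile $\profile[N(v)]$, tie-broken as in the preliminaries. By the proof of \Cref{obs:kemeny-sp}, such a ranking can be built from the bottom up: on the contiguous sub-axis $c_i \rhd \dots \rhd c_j$ that remains after some candidates have already been placed, at least one of the two extreme candidates $c_i, c_j$ is a weak Condorcet loser of the induced subprofile (deleting candidates preserves single-peakedness on the remaining interval, and one end must be ranked last in every single-peaked ranking). We append such a loser to the bottom of the ranking under construction, remove it, and recurse on the shrunk interval; after $|C|-1$ rounds the full ranking is determined.

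First I would show that identifying a weak Condorcet loser among the two current ends reduces to examining a single popularity margin. Single-peakedness forces the margins $\pop(c_k, c_i)$ to be monotone in the distance $|k-i|$: the more distant a candidate, the fewer voters beat $c_i$ with it (the same phenomenon that underlies \Cref{lem:mmc_worst_defeat}). Hence, among the candidates of $[i,j]$, the end $c_i$ is beaten by the smallest margin by the \emph{far} end $c_j$, so $c_i$ is a weak Condorcet loser of the interval if and only if $\pop(c_j, c_i) \ge 0$; symmetrically $c_j$ is one iff $\pop(c_j, c_i) \le 0$. Thus the sign of the single margin $\pop(c_j, c_i)$ tells us which end to drop, and signals a tie (when $\pop(c_j,c_i)=0$, both ends are weak Condorcet losers).

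For the running time I would precompute the matrix of all pairwise popularity margins of $\profile[N(v)]$. Since $|N(v)| \le |V|$, iterating over every voter and, for each, over all $\binom{|C|}{2}$ candidate pairs costs $O(|V|\cdot|C|^2)$; alternatively one may compute only the $O(|C|)$ end-versus-end margins that the recursion actually queries, in $O(|V|\cdot|C|)$ total. Given the margins, each of the $|C|-1$ rounds performs a constant number of lookups to decide which end to remove, so the construction itself costs $O(|C|)$ beyond the margin computation. Whenever a round produces a tie, the tie-breaking rule applies: among the (at most two) resulting single-peaked Kemeny rankings we keep the one minimising $\kt$ to $v$'s current opinion, breaking any remaining tie lexicographically; each such comparison involves only the order of the dropped candidate relative to the not-yet-placed candidates and is charged $O(|C|)$, for $O(|C|^2)$ over all rounds. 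Summing, the total is dominated by the margin computation, giving the claimed $O(|V|\cdot|C|^2)$ bound.

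The main obstacle is the interaction with the tie-breaking rule rather than the raw margin bookkeeping. I would need to argue that resolving ties \emph{locally and greedily} during the bottom-up construction---at each tie preferring the end that $v$'s current opinion ranks lower---indeed yields the globally $\kt$-closest single-peaked Kemeny ranking, and that these choices stay consistent across rounds (the decisions made at different levels fix disjoint sets of candidate pairs, so they cannot conflict). Establishing this exchange-type argument, together with confirming that every choice left open by the Condorcet-loser test is exactly a free choice among single-peaked Kemeny rankings, is the delicate part; the time analysis above then goes through unchanged.
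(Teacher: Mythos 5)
The paper states this lemma without any proof; the only guidance it offers is the preceding remark that one end of the single-peaked axis is always among the Condorcet losers, and your bottom-up peeling algorithm is exactly the construction that remark points to (it is also the recursion underlying the paper's proof of \cref{obs:kemeny-sp}). Your reduction of the weak-Condorcet-loser test on an interval to the sign of the single margin $\pop(c_j,c_i)$ via margin monotonicity is correct, and precomputing all pairwise margins of $\profile[N(v)]$ in $O(|V|\cdot|C|^2)$ time indeed dominates the running time, so the bound follows. The one step you defer --- that breaking each tie locally in favour of the end that $v$'s current opinion ranks lower yields the globally $\kt$-closest single-peaked Kemeny ranking --- is genuinely needed for the update to be the one prescribed by the paper's tie-breaking rule, but the exchange argument you anticipate does close it: if an optimal drop sequence drops the $\succ_v$-higher end $c_j$ at a tied stage $[i,j]$ and only later drops $c_i$, then advancing the drop of $c_i$ to that stage keeps every intervening right-end drop $c_\ell$ Kemeny-valid (by the same monotonicity, $\pop(c_\ell,c_{i+1})\le\pop(c_\ell,c_i)\le 0$) and flips only the pairs $(c_i,c_\ell)$, on each of which $\succ_v$ already ranks $c_i$ below $c_\ell$ since the $\succ_v$-lower end of an interval is $\succ_v$-lowest in the whole interval; hence the cost cannot increase. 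With that observation filled in, your proposal is complete and consistent with what the paper intends.
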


\subsection{Convergence of MMC Updates}

Recall that $\top(v)$ is the peak in voter $v$'s single-peaked preference ranking. Further, denote by $\spdist(c_i,c_j) = |i-j|$ the distance between candidates $c_i$ and $c_j$ according to the single-peaked axis $c_1 \rhd \dots \rhd c_m$.
We show that if voters update their opinion by applying MMC, the total distance between the peaks of the opinions of neighbours can never increase.
\begin{lemma}[\appref{le: non-increasing-dist-sum-for-weakCWconsistant}]\label{le: non-increasing-dist-sum-for-weakCWconsistant}
Let~$G=(V, E)$ be a network with single-peaked preference profile~$\profile$. Under MMC updates the sum of distances $\sum_{\{v, v'\} \in E} \spdist(\top(v), \top(v'))$ cannot increase.
\end{lemma}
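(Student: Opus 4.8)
The plan is to exploit the locality of a single update together with the fact that the peak of any MMC ranking is a weak Condorcet winner, which \cref{le: extension-of-median-voter-thm} identifies with a median of the neighbours' peaks.

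First I would observe that a single update changes only the opinion of the active voter $v$, so the only terms of $\sum_{\{u,w\}\in E}\spdist(\top(u),\top(w))$ that can change are those of the edges incident to $v$. Writing $\top(v)$ for $v$'s old peak and $\top(v')$ for its new peak, it thus suffices to show
\[
\sum_{u\in N(v)}\spdist(\top(v'),\top(u)) \;\le\; \sum_{u\in N(v)}\spdist(\top(v),\top(u)).
\]

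Next I would argue that $\top(v')$ is a median of the multiset of neighbour peaks $\{\top(u):u\in N(v)\}$ along the axis. The new opinion of $v$ lies in $\rrule_{\mmc}(\profile[N(v)])$; even after the tie-breaking of \cref{sec: prelim} --- which restricts to single-peaked rankings (nonempty by \cref{thm:mmc_sp_preserving}) and then breaks ties via Kendall's tau --- the selected ranking is still an MMC ranking. Since MMC is weak Condorcet winner consistent, its top candidate $\top(v')$ is a weak Condorcet winner of $\profile[N(v)]$, and as $\profile[N(v)]$ is single-peaked on the same axis, \cref{le: extension-of-median-voter-thm} shows that the weak Condorcet winners are exactly the median peaks together with any candidate lying between two median peaks. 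Hence $\top(v')$ is a median of the neighbours' peaks.

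Finally I would invoke the standard fact that on a line a median minimises the sum of absolute ($\ell_1$) distances to a finite point multiset; here the points are the axis indices of the peaks and the distance is precisely $\spdist$. Therefore $\sum_{u\in N(v)}\spdist(\top(v'),\top(u)) = \min_{c\in C}\sum_{u\in N(v)}\spdist(c,\top(u))$, which is at most the value at $c=\top(v)$, giving the displayed inequality; summing over all edges shows the total cannot increase. The point requiring the most care will be the even-cardinality case of \cref{le: extension-of-median-voter-thm}: I must confirm that not only the two middle peaks but every candidate strictly between them still minimises the $\ell_1$ sum, so that whichever MMC ranking the tie-breaking selects, its peak is guaranteed to be a minimiser. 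The empty-neighbourhood case is trivial, since $v$ then has no incident edges.
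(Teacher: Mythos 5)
Your proposal is correct and takes essentially the same route as the paper's proof: reduce to the edges incident to the active voter, note that the new peak is a weak Condorcet winner of the neighbourhood profile and hence (by \cref{le: extension-of-median-voter-thm}) a median peak or a candidate between the two median peaks, and conclude via the fact that any such candidate minimises the sum of axis distances to the neighbours' peaks. The only difference is that the paper proves the even-cardinality median-minimisation fact explicitly (first showing all candidates between the two median peaks attain the same sum, then that this sum is minimal), exactly the point you correctly flag as requiring care.
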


\appendixproof{le: non-increasing-dist-sum-for-weakCWconsistant}{
\begin{proof}
	When a voter $v$ updates their preference, then only $\spdist(\top(v), \top(v'))$ for voters $v' \in N(v)$ that are neighbours of $v$ can change.
	We claim that $\sum_{v' \in N(v)} \spdist(\top(v), \top(v'))$ can never increase.

	For a single-peaked preference profile $\profile$, MMC selects a (weak) Condorcet winner as the top ranked candidate. 
	By Lemma~\ref{le: extension-of-median-voter-thm} we know that the (weak) Condorcet winners of a single-peaked preference profile are those that are the peaks of median voters (when sorting the voters by their peaks according to the single-peaked axis) and any candidates that lie in between these peaks.
	We claim that any (weak) Condorcet winner $c_i$ minimises $\sum_{v' \in N(v)} \spdist(c_i, \top(v'))$. 

	The proof proceeds in the following two steps. First we establish that if there exist two median voters (i.e., the number of voters is even) with peaks $c_i$ and $c_j$ with $i<j$, then all candidates $c \in \{ c_i, c_{i+1}, \ldots, c_j\}$ have the same sum of distances to peaks of voters in $N(v)$. If there is an odd number of voters, then there exists exactly one median voter who's peak candidate is the (strict) Condorcet winner. Thus all Condorcet winners have the same sum of distances to peaks of voters in $N(v)$. Next we prove that this sum of distances minimises the sum of distances to peaks of voters in $N(v)$ among all candidates. Thus no matter which opinion voter $v$ comes from, the updated opinion cannot increase the sum of distances to the peaks of voters in $N(v)$.

	Let there exist two median voters with peaks $c_i$ and $c_j$ with $i<j$ and let $c \in \{ c_i, c_{i+1}, \ldots, c_j\}$. We show that
	\[ \sum_{v' \in N(v)} \spdist(c, \top(v')) = \sum_{v' \in N(v)} \spdist(c_i, \top(v')). \]
	We know that for all voters $v' \in N(v)$ their peak is either equal to $c_i$ or $c_j$ or is not a median peak, i.e., $\top(v') \in C\setminus \{ c_{i+1}, \ldots, c_{j-1}\}$. Let $\top_-$ be the set of peaks in $N(v)$ equal to $c_i$ or left of it on the single-peaked axis and let $\top_+$ be the set of peaks equal to $c_j$ or right of it on the single-peaked axis. By construction for all $v' \in N(v)$ either $\top(v') \in \top_-$ or $\top(v') \in \top_+$.
	Let~$N_-(v)=\{v' \in N(v) : \top(v')\in\top_-\}$ and~$N_+(v)=\{v' \in N(v) : \top(v')\in\top_-\}$ be the set of neighbours with peaks in~$\top^-$ and~$\top^+$, respectively.
	Then
	\allowdisplaybreaks
	\begin{align*}
		\sum_{v' \in N(v)} &\spdist(c_i, \top(v')) \\
		&= \sum_{v' \in N_-(v)} \spdist(c_i, \top(v')) + \sum_{v' \in N_+(v)} \spdist(c_i, \top(v')) \\
		&= \sum_{v' \in N_-(v)} \left(\spdist(c, \top(v')) + \spdist(c_i, c) \right)\\
		&\phantom{= \sum_{v' \in N_-(v)}} + \sum_{v' \in N_+(v)} \left( \spdist(c, \top(v')) - \spdist(c_i, c) \right) \\
		&= \frac{|N(v)|}{2} \cdot \spdist(c_i, c) + \sum_{v' \in N_-(v)} \spdist(c, \top(v')) \\
		&\phantom{= \sum_{v' \in N_-(v)}} - \frac{|N(v)|}{2} \cdot \spdist(c_i, c) \\
		&\phantom{= \sum_{v' \in N_-(v)}} + \sum_{v' \in N_+(v)} \spdist(c, \top(v')) \\
		&= \sum_{v' \in N_-(v)} \spdist(c, \top(v')) + \sum_{v' \in N_+(v)} \spdist(c, \top(v')) \\
		&= \sum_{v' \in N(v)} \spdist(c, \top(v')).
	\end{align*}

	We now show that any Condorcet winner minimises the distance to peaks of voters in $N(v)$.
	Let~$c_i$ be the peak of a median voter in~$N(v)$.
	Assume for contradiction that~$c_i$ does not minimise the distance $\sum_{v' \in N(v)} \spdist(c_i, \top(v'))$.
	Then there exists some candidate~$c_h$ such that
	\[\sum_{v' \in N(v)} \spdist(c_i, \top(v')) - \sum_{v' \in N(v)} \spdist(c_h, \top(v')) > 0.\]
	Note that in this case, if the peak of voter~$v$'s opinion changes from~$c_i$ to~$c_h$ in an update, the sum of distances would increase.
	Let~$w_\ell$ be the number of voters in~$N(v)$ that have~$c_\ell$ at the top of their preference ranking.
	Then, for a candidate~$c_k$ we have~$\sum_{v' \in N(v)} \spdist(c_k, \top(v')) = \sum_{\ell=1}^{m} |k-\ell| w_\ell$.
	Assume $i > h$.
	Then
	\begin{align*}
		0 &< \sum_{\ell=1}^m |i-\ell| w_\ell - \sum_{\ell=1}^m |h-\ell| w_\ell\\
		  &= \sum_{\ell=1}^{i-1} (i-\ell)w_\ell + \sum_{\ell=i}^{m} (i-\ell)w_\ell \\
		  &\phantom{= \sum_{\ell=1}^{i-1}} - \Big(\sum_{\ell=1}^h (h-\ell)w_\ell + \sum_{\ell=h+1}^{m} (\ell-h)w_\ell\Big)\\
		&= \sum_{\ell=1}^h (i-\ell-h+\ell)w_\ell + \sum_{\ell=h+1}^{i-1} (i-\ell-\ell+h)w_\ell \\
		&\phantom{= \sum_{\ell=1}^{i-1}} + \sum_{\ell=i}^m (\ell-i-\ell+h)w_\ell\\
		&= \sum_{\ell=1}^h (i-h)w_\ell - \sum_{\ell=i}^m (i-h)w_\ell + \sum_{\ell=h+1}^{i-1} (i+h-2\ell)w_\ell\\
		&= \sum_{\ell=1}^h (i-h)w_\ell - \sum_{\ell=i}^m (i-h)w_\ell + \!\sum_{\ell=h+1}^{i-1}\! (i-h)w_\ell \\
		&\phantom{= \sum_{\ell=1}^{i-1}} + \sum_{\ell=h+1}^{i-1} (i+h-2\ell-i+h)w_\ell\\
		&= (i-h)\big(\sum_{\ell=1}^{i-1} w_\ell - \sum_{\ell=i}^m w_\ell \big) + \sum_{\ell=h+1}^{i-1} (2h-2\ell)w_\ell.
		\stepcounter{equation}\tag{\theequation}\label{eq:weakCWconsistent-contradiction}
	\end{align*}
	As~$c_i$ is the peak of a median voter in~$N(v)$, we have~$\sum_{\ell = 1}^i w_\ell \ge \sum_{\ell=i+1}^{m} w_\ell$ and~$\sum_{\ell = 1}^{i-1} w_\ell \le \sum_{\ell = i}^m w_\ell$.
	Hence, the first term in \eqref{eq:weakCWconsistent-contradiction} is at most zero, and as~$w_\ell \ge 0$ for all~$1 \le \ell \le m$, the second term is at most zero as well.
	This is a contradiction to \eqref{eq:weakCWconsistent-contradiction} being greater than zero.
	The case~$i<h$ is analogous due to symmetry; thus a weak Condorcet winner $c_i$ always minimises the distance $\sum_{v' \in N(v)} \spdist(c_i, \top(v'))$, and our claim is proven.
\end{proof}
}

Note that while a single MMC update can be computed in polynomial time,~\cref{le: non-increasing-dist-sum-for-weakCWconsistant} does not give us a distance measure that constantly decreases.
Thus we cannot estimate how long an update sequence can be. Nevertheless, we now show the convergence of MMC as an update rule.

\begin{theorem}[\appref{lem:minimax-convergence}]
\label{lem:minimax-convergence}
Let~$G = (V, E)$ be a network with single-peaked preference profile $\profile$.
Then any update sequence using MMC is finite.
That is, MMC converges.
\end{theorem}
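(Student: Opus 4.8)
The goal is to establish convergence of MMC updates, i.e., that no update sequence can be infinite. The plan is to combine the potential-function machinery of \cref{le: non-increasing-dist-sum-for-weakCWconsistant} with a secondary tie-breaking argument, since as the authors note, the peak-distance sum is only non-increasing and need not strictly decrease at every update. First I would argue that along any infinite update sequence, the monotone integer-valued quantity $\Phi_1 := \sum_{\{v,v'\}\in E}\spdist(\top(v),\top(v'))$ must eventually stabilise: it is bounded below by $0$ and can only decrease, so after finitely many steps it reaches some constant value and stays there. Hence it suffices to rule out infinite update sequences in which $\Phi_1$ never changes.

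The crux is then to analyse updates that leave $\Phi_1$ constant. I would show that once $\Phi_1$ has stabilised, each active voter's peak can no longer move, because \cref{le: non-increasing-dist-sum-for-weakCWconsistant} shows that a voter adopting a (weak) Condorcet winner as its peak minimises $\sum_{v'\in N(v)}\spdist(\cdot,\top(v'))$; if the voter's current peak already attains this minimum, then by the tie-breaking rule (which favours the ranking closest in Kendall's tau distance to the voter's current opinion) the MMC update must select the Condorcet winner equal to the current peak whenever that is permissible. So on the stabilised regime all peaks are frozen, and the induced neighbourhood profiles $\profile[N(v)]$ that each active voter sees have a fixed set of weak Condorcet winners, hence a fixed top candidate for the updated ranking. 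This reduces the problem to convergence of the rankings \emph{below the peak}.

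To finish, I would introduce a secondary potential on the frozen-peak regime. Since all peaks are fixed, each MMC update changes only the relative order of non-peak candidates in the active voter's ranking, and one can recurse: fixing the peaks is equivalent to working on a restricted single-peaked subproblem. I would set up a lexicographic or summed potential $\Phi_2 := \sum_{\{u,v\}\in E}\kt(\succ_u,\succ_v)$ and argue that once peaks are frozen, each genuine update (one that changes the opinion) strictly decreases $\Phi_2$, using that the MMC ranking minimises worst-defeat positions consistently with the single-peaked structure and that the Kendall-tau tie-break forces the active voter toward its neighbours. Because $\Phi_2$ is a non-negative integer bounded by $|E|\cdot\binom{|C|}{2}$, only finitely many such updates can occur, contradicting the assumed infinite sequence.

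The main obstacle I expect is the middle step: establishing that an MMC update strictly decreases the secondary potential $\Phi_2$ whenever the opinion actually changes, once peaks are frozen. The peak-distance argument of \cref{le: non-increasing-dist-sum-for-weakCWconsistant} gives clean control over the top candidate but says nothing about the full ranking, and MMC's worst-defeat scoring does not obviously behave monotonically with respect to $\kt$ the way Kemeny's rule does in \cref{lem:kemeny-convergence}. I would therefore need to exploit \cref{lem:mmc_worst_defeat} (only neighbouring candidates on the axis matter) together with \cref{lem:mmc_ranking_sequence} to show that the MMC output respects the same left-to-right threshold structure, so that recursing on the sub-axis obtained by deleting the common peak reduces the argument to a smaller single-peaked instance and permits an induction on $|C|$.
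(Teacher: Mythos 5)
Your opening step coincides with the paper's: by \cref{le: non-increasing-dist-sum-for-weakCWconsistant} the integer potential $\sum_{\{v,v'\}\in E}\spdist(\top(v),\top(v'))$ is non-increasing and hence eventually constant, and from then on (equivalently, on any update cycle) every active voter's new peak is a weak Condorcet winner of its neighbourhood profile which, by the tie-breaking rule, stays fixed. The gap is in what you do after the peaks freeze. Your secondary potential $\Phi_2=\sum_{\{u,v\}\in E}\kt(\succ_u,\succ_v)$ has no reason to decrease under an MMC update: MMC orders candidates by worst defeat, not by Kendall's tau distance to the neighbourhood profile, and the tie-breaking rule minimises $\kt$ only to the active voter's \emph{own previous} opinion among tied MMC outputs, not to the neighbours' opinions. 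Since MMC genuinely disagrees with Kemeny's rule on single-peaked profiles (cf.\ \cref{obs:minmaxcond-cl-inconsistent}), an MMC update can strictly increase $\Phi_2$, so the potential argument of \cref{lem:kemeny-convergence} does not transfer. You identify this as the main obstacle but do not close it, and your fallback of recursing ``on the sub-axis obtained by deleting the common peak'' is not well defined: distinct active voters see distinct neighbourhood profiles and freeze to distinct peaks, so there is no single candidate whose removal simultaneously reduces every voter's subproblem.

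The paper's proof instead inducts on the number of candidates by deleting the \emph{extreme} candidate $c_{m+1}$ at the end of the axis. The frozen-peak observation, together with the fact that $r^\uparrow$ and $r^\downarrow$ are the only single-peaked rankings topped by the two axis endpoints, settles the base case $m=2$ and shows that no opinion in a cycle switches from or to an extreme opinion. For the induction step one assumes an update cycle over $m+1$ candidates and projects it onto the first $m$ candidates: by \cref{lem:mmc_worst_defeat} every candidate's worst defeat is realised against an axis-neighbour, so removing $c_{m+1}$ leaves the MMC scores of $c_1,\dots,c_{m-1}$ in every neighbourhood profile unchanged and can only lower that of $c_m$; a careful case analysis (using the tie-breaking rule) then shows each update in the cycle produces the same restricted ranking, so the cycle survives as an update cycle over $m$ candidates, contradicting the induction hypothesis. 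This is the concrete mechanism your proposal is missing: rather than a second monotone potential, the argument needs the invariance of MMC scores under deletion of an axis endpoint, which is exactly what \cref{lem:mmc_worst_defeat} provides.
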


\begin{proof}[Proof sketch]
We prove the statement by induction over the number of candidates $m$. By Lemma~\ref{le: non-increasing-dist-sum-for-weakCWconsistant}, the sum of distances $\sum_{\{v, v'\} \in E} \spdist(\top(v), \top(v'))$ cannot increase with any update. In particular, in an update cycle the distances must stay equal for every update. The top choice of a voter after that voter updated must always be a weak Condorcet winner and, because of the tie breaking rule, must remain the same Condorcet winner throughout all updates.
Furthermore, because of the tie breaking rule, and because the only single-peaked rankings with $c_1$, respectively $c_m$, on the top are the extreme opinions, Lemma~\ref{le: non-increasing-dist-sum-for-weakCWconsistant} implies that there can be no update cycle for which opinions are switched from or to an extreme opinion.  
The base case of the induction, $m=2$, is thus settled.
For the induction step assume there exist no cycles of voter updates in any network with single-peaked opinions over $m$ candidates, but that there exists an update cycle $K$ in~$G$ with single-peaked preference profile $\profile$ over $m+1$ candidates. We show that by eliminating the last candidate of the single-peaked axis of $\profile$, the relative ranks of the candidates remain the same for every voter update in $K$ and thus $K$ must be an update cycle on $G$ with a single-peaked preference profile over $m$ candidates -- a contradiction.
\end{proof}

\appendixproof{lem:minimax-convergence}{
\begin{proof}
We prove the statement by induction over the number of candidates $m$. By Lemma~\ref{le: non-increasing-dist-sum-for-weakCWconsistant}, the sum of distances $\sum_{\{v, v'\} \in E} \spdist(\top(v), \top(v'))$ cannot increase with any update. In particular, in an update cycle this means that the distances must stay equal for every update. The top choice of a voter after that voter updated must always be a weak Condorcet winner and, because of the tie breaking rule, must remain the same Condorcet winner throughout all updates.

Furthermore, because of the tie breaking rule, and because the only single-peaked rankings with $c_1$, respectively $c_m$, on the top are the extreme opinions, Lemma~\ref{le: non-increasing-dist-sum-for-weakCWconsistant} implies that there can be no update cycle for which opinions are switched from or to an extreme opinion.  
 The base case of the induction, $m=2$, is thus settled.
 
 Assume now there exist no cycles in any network with single-peaked opinions over $m$ candidates.
 Suppose there exists a network $G= (V,E)$  with single-peaked preference profile $\profile$ over $m+1$ candidates that admits a cycle of updates of voters $(v_1, \dots, v_k)$ with $v_1 = v_k$  such that after these updates the preference profile of $G$ is the same as before, $\profile$. Consider the profile $\profile'$ over $m$ candidates that results from $\profile$ by deleting candidate $c_{m+1}$ from every voters preference ranking. 
 We show in the following that $(v_1, \dots, v_k)$ is a cycle of voter updates in $G$ which, starting from profile $\profile'$ leads to the same profile $\profile'$ again. This is a contradiction to the induction assumption.
 
 Let us denote by $\profile_i$ the profile resulting from the updates of voters $v_1, \dots, v_i$ in $G$ starting from $\profile$. Similarly, let $\profile_i'$ be the profile after the first $i$ updates starting from $\profile'$ in $G$. Let $l \in [k]$. Since $c_{m+1}$ is never at the top of any of the voters $v_i$'s opinions in any $\profile_l$, $c_m$ can never be defeated by $c_{m+1}$ in a pairwise comparison in any $\profile_l[N(v_i)]$.
 Also, by Lemma~\ref{lem:mmc_worst_defeat}, none of the candidates $c_1, \dots, c_{m-1}$ can experience their worst defeat only against $c_{m+1}$ in any of the profiles $\profile_l[N(v_i)]$. 
 Thus, for every voter $v_i$ every candidate $c_1, \dots, c_{m-1}$ admits the same MMC score in any $\profile_l[N(v_i)]$ as in  $\profile'_l[N(v_i)]$.
The MMC score of $c_m$ might decrease in case its worst 'defeat' was only against $c_{m+1}$, and stays the same otherwise. Consider the update of one voter $v_i$ at the $i$-th step of the update cycle and assume $c_m$'s worst 'defeat' was only against $c_{m+1}$ in  $\profile_{i-1}[N(v_i)]$. Because the only single-peaked opinion in which $c_{m+1}$ is ranked over $c_m$ is $r^{\downarrow}$, every other candidate has at least as bad of a defeat against $c_{m+1}$ and their MMC score is at least as high as $c_m$'s.
Let the set of voters in $N(v_i)$ with opinion $r^{\downarrow}$ be $M$. Suppose there exists another candidate $c_j$ that has exactly the same MMC score in  $\profile_{i-1}[N(v_i)]$ as $c_m$ and is top ranked in $v_i$'s opinion in  $\profile_{i-1}$.
Because $c_{m+1}$ is not top ranked in $v_i$'s opinion in  $\profile_{i-1}$, $c_{m+1}$ must have a higher MMC score than $c_j$ and $c_m$. Thus, there must exist enough voters $M'$ in $\profile_{i-1}[N(v_i)]$ that rank $c_{m+1}$ lower than some other candidate, such that $c_{m+1}$'s MMC score is worse than $c_m$'s, i.e., with $|M'|-|N(v_i)\setminus M'|>|M|-|N(v_i)\setminus M|$. In particular, voters in $M'$ must rank $c_j \succ c_m \succ c_{m+1}$ so that $c_j$'s worst defeat is as high as $c_m$'s. This is a contradiction to $c_m$'s worst defeat being only against $c_{m+1}$ because it is beaten worse by $c_j$.
Thus we can assume that $c_m$ is top ranked in $v_i$'s opinion in all profiles  $\profile_l[N(v_i)]$. In $\profile'_{i-1}[N(v_i)]$, $c_m$'s MMC score decreased compared to $\profile_{i-1}[N(v_i)]$, but the MMC score of all other candidates stayed the same. Thus in $\profile_{i}'[N(v_i)]$, $c_m$ is still top ranked in voter $v_i$'s preference as well as in every other profile $\profile'_l[N(v_i)]$. 

Hence, in case $(v_1, \dots, v_k)$ is a cycle of voter updates in $G$ which, starting from profile $\profile$ leads to the same profile $\profile$ again, the same must be true for profile $\profile'$.
\end{proof}
}

\section{Maximally Spreading Extreme Opinions}\label{sec: spreading extreme opinions}

Especially in the political context, studying extreme opinions and their spread on social networks is often seen as more important than the spread of more centralist views. Platforms like Facebook, Youtube or Instagram invest vast resources to combat the spread of hateful and extreme opinions.

Recall, that for single-peaked axis~$c_1 \rhd \dots \rhd c_m$
the rankings~$r^\uparrow = c_1 \succ \dots \succ c_m$ and~$r^\downarrow = c_m \succ \dots \succ c_1$ are called the \emph{extreme opinions}, as they rank the extreme candidates on the single-peaked axis highest.
In the following, we consider under which conditions an extreme opinion can be maximally spread in a single-peaked preference network. For this purpose we define the notion of \emph{extremist majority consistency} for ranking rules that preserve single-peakedness -- a property that holds for both Kemeny's rule and MMC. We then formulate a general algorithm for extremist majority consistent rules that maximally spreads extreme opinions.

\subsection{Extremist Majority Consistency}\label{sec: general results extr maj conist}
Informally, extremist majority consistency states that an extreme opinion can only be adopted by an agent when a majority of the agent's neighbours hold this opinion.
\begin{definition}
	Let $\profile=(C,V) \in \mathcal{L}_{sp}^\rhd(C)$.
	We call a ranking rule $\rrule$ that preserves single-peakedness \emph{extremist majority consistent} if for any extreme opinion~$r^* \in \{r^\uparrow, r^\downarrow\}$ we have
	\begin{align*}
	    r^* \in \rrule(\profile) \iff & \text{ a (weak) majority in $V$ has opinion $r^*$}.
	\end{align*} 
\end{definition}

One might think that for extremist majority consistency it is sufficient that a ranking $\rrule$ preserves single-peakedness and is weak Condorcet winner consistent.
However, this is not the case since these two properties alone do not enforce that there exists an output ranking that is single-peaked \emph{and} ranks the desired weak Condorcet winner ($c_1$ or $c_m$) highest.

We proceed by showing that Kemeny's update rule and MMC are extremist majority consistent.

\begin{theorem}\label{le: update-to-extremes-kemeny}
Kemeny's rule is extremist majority consistent.
\end{theorem}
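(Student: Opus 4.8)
The plan is to prove the biconditional for the extreme opinion $r^{\uparrow}$ and then obtain the statement for $r^{\downarrow}$ by reversing the single-peaked axis (which swaps $c_1 \leftrightarrow c_m$ and $r^{\uparrow} \leftrightarrow r^{\downarrow}$). First I would establish one structural identity on which both directions rest: $r^{\uparrow}$ is the \emph{unique} single-peaked ranking whose peak is $c_1$, so a voter ranks $c_1$ above $c_2$ if and only if their opinion is exactly $r^{\uparrow}$. For the non-trivial implication, if $\top(v) = c_k$ with $k \ge 2$, then $c_2 \succ_v c_1$: this is immediate for $k=2$, and for $k>2$ it follows by applying single-peakedness to the axis triple $c_1 \rhd c_2 \rhd c_k$ together with $c_k \succ_v c_2$. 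Hence $|\{v \in V \mid c_1 \succ_v c_2\}| = |\{v \in V \mid \succ_v = r^{\uparrow}\}|$, which is the bridge between a purely pairwise fact about $c_1$ versus $c_2$ and a statement about the full ranking $r^{\uparrow}$.

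For the direction ``weak majority $\Rightarrow r^{\uparrow} \in \rrule(\profile)$'', I would use that the Kemeny score decomposes over unordered candidate pairs as $\dist(r,\profile) = \sum_{\{c,c'\} \subseteq C} w_r(c,c')$, where $w_r(c,c')$ is the number of voters disagreeing with $r$ on $\{c,c'\}$, and that $w_r(c,c') \ge \min(|\{v \mid c \succ_v c'\}|, |\{v \mid c' \succ_v c\}|)$ for every ranking $r$. It therefore suffices to show that $r^{\uparrow}$ attains this per-pair minimum on every pair. If a weak majority (at least $n/2$ voters) holds $r^{\uparrow}$, then for every $i<j$ at least $n/2$ voters rank $c_i \succ c_j$, so the orientation $c_i \succ_{r^{\uparrow}} c_j$ agrees with a weak majority and $w_{r^{\uparrow}}(c_i,c_j)$ equals the minimum. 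Summing over all pairs shows $r^{\uparrow}$ minimises the Kemeny score, i.e.\ $r^{\uparrow} \in \rrule(\profile)$.

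For the converse, I would use an adjacent-swap exchange argument. Assume $r^{\uparrow}$ is a Kemeny ranking and let $r' = c_2 \succ c_1 \succ c_3 \succ \dots \succ c_m$ be obtained by swapping the two top candidates; $r'$ is again single-peaked. Since $c_1$ and $c_2$ are adjacent in $r^{\uparrow}$, only the pair $\{c_1,c_2\}$ reverses its orientation, so $\dist(r',\profile) - \dist(r^{\uparrow},\profile) = |\{v \mid c_1 \succ_v c_2\}| - |\{v \mid c_2 \succ_v c_1\}|$. Optimality of $r^{\uparrow}$ forces this difference to be nonnegative, hence $|\{v \mid c_1 \succ_v c_2\}| \ge n/2$. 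By the structural identity this count equals $|\{v \mid \succ_v = r^{\uparrow}\}|$, so a weak majority holds $r^{\uparrow}$, as required.

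I expect the main obstacle to be conceptual rather than computational: the Kemeny-optimality arguments only ever control the single pair $\{c_1,c_2\}$, and the real work is justifying the leap from ``$c_1$ weakly beats $c_2$'' to ``a weak majority holds the \emph{entire} ranking $r^{\uparrow}$''. This is exactly what the uniqueness of the single-peaked ranking with peak $c_1$ delivers, collapsing the two events into one. I would also be careful about the boundary convention for ``weak majority'' ($\ge n/2$) so that the identity $|\{v\mid c_1\succ_v c_2\}| + |\{v \mid c_2 \succ_v c_1\}| = n$ makes the inequalities line up cleanly in both directions.
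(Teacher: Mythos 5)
Your proof is correct and, for the substantive direction, matches the paper's argument: both compare $r^\uparrow$ against the adjacent transposition $r' = c_2 \succ c_1 \succ c_3 \succ \dots \succ c_m$ and exploit that $r^\uparrow$ is the unique single-peaked ranking placing $c_1$ above $c_2$, so Kemeny-optimality on the pair $\{c_1,c_2\}$ translates into a weak majority of voters holding the entire ranking $r^\uparrow$. The only (immaterial) difference is in the easy direction, where you show that $r^\uparrow$ attains the per-pair minimum disagreement on every candidate pair, whereas the paper asserts it directly and backs it up with a triangle-inequality argument in an appendix lemma stating that any weak-majority opinion is a Kemeny ranking.
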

\begin{proof}
	Due to symmetry we need to prove the statement only for the extreme opinion~$r^\uparrow$.
    Assume that at least half of the voters in $\profile=(C,V)$ have opinion~$r^\uparrow$. By definition of Kemeny's rule, $r^\uparrow$ is among the output of the rule.
	For the other direction assume that $r^\uparrow$ is a Kemeny ranking and consider the ranking $r' = c_2 \succ c_1 \succ c_3 \succ \dots \succ c_m$, which is the only single-peaked ranking with $\kt(r^\uparrow, r') = 1$.
	Then $\kt(r^\uparrow, r) \geq \kt(r', r) +1$ for all single-peaked rankings $r \in \mathcal{L}_{sp}^\rhd(C) \setminus \{r^\uparrow\}$. Also:
\begin{align*}  
	&
	\kt(r', \profile)  
            = \sum\nolimits_{v \in V} \kt(r', \succ_v) \\ 
	    &= |\{v \in V \mid \succ_v=r^{\uparrow}\}| + \sum\nolimits_{v \in V : \succ_v\ne r^\uparrow} \kt(r', \succ_v)  \\ 
            &\leq |\{v \in V \mid \succ_v=r^{\uparrow}\}| + \sum\nolimits_{v \in V : \succ_v\ne r^\uparrow} \kt(r^{\uparrow}, \succ_v) - 1  \\ 
            &\leq |\{v \in V \mid \succ_v=r^{\uparrow}\}| + \kt(r^{\uparrow}, \profile) - |\{v \in V \mid \succ_v \neq r^{\uparrow}\}|  \\ 
            &\leq |\{v \in V \mid \succ_v=r^{\uparrow}\}| + \kt(r', \profile)   - |\{v \in V \mid \succ_v \neq r^{\uparrow}\}|
\end{align*}
and hence, $|\{v \in V \mid \succ_v=r^{\uparrow}\}| \geq |\{v\in V \mid \succ_v \neq r^{\uparrow}\}|$.
\end{proof}

Note that for non-extreme opinions this equivalence need not hold, i.e., a Kemeny ranking need not be supported by a majority of the voters. However, if the majority of voters has opinion $r$, then $r$ is a Kemeny ranking (see \cref{apsec: Kemeny-non-extreme-opinions}).

\toappendix
{
\subsection{Under Kemeny Updates Non-Extreme Opinions are not Majority Consistent}\label{apsec: Kemeny-non-extreme-opinions}
The following example shows that there exist Kemeny rankings that are not supported by a majority of voters.
\begin{example}
Consider the following single-peaked preference profile.
    \begin{align*}
        2 \times ~~ &a \succ b \succ c \\
        1 \times ~~ &b \succ c \succ a \\
        1 \times ~~ &c \succ b \succ a 
    \end{align*}
    
One of the Kemeny rankings in this case is $b \succ a \succ c$ even though none of the voters' preference rankings coincide with $b \succ a \succ c$.
\end{example}

Next, we show the reverse. That is, any ranking that is supported by a majority is a Kemeny ranking. 
\begin{lemma}
\label{le: update-to-majority-opinion-Kemeny}
    Let $\profile$ be a single-peaked preference profile. If the majority of voters in $\profile$ has opinion $r$, then $r$  is a Kemeny ranking of $\profile$.
\end{lemma}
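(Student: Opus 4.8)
The plan is to show directly that for every competing ranking $r' \in \mathcal{L}$ we have $\kt(r', \profile) \ge \kt(r, \profile)$, using only that Kendall's tau distance is a metric. In particular I expect that single-peakedness of $\profile$ will not actually be needed here: this is the general fact that a point holding at least half of the probability mass is a median (a minimiser of the sum of distances) in any metric space.

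First I would partition the voters. Let $M = \{v \in V \mid \succ_v = r\}$ be the set of voters whose opinion is exactly $r$, so by the majority assumption $|M| \ge |V|/2$. Fix an arbitrary ranking $r'$ and write
\[
\kt(r', \profile) - \kt(r, \profile) = \sum\nolimits_{v \in V} \big(\kt(r', \succ_v) - \kt(r, \succ_v)\big).
\]
Next I would bound each summand according to whether $v$ lies in $M$. For $v \in M$ we have $\succ_v = r$, so $\kt(r, \succ_v) = 0$ and the summand equals $\kt(r', r)$. For $v \notin M$ the triangle inequality $\kt(r, \succ_v) \le \kt(r, r') + \kt(r', \succ_v)$ rearranges to $\kt(r', \succ_v) - \kt(r, \succ_v) \ge -\kt(r, r')$. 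Summing these bounds yields
\[
\kt(r', \profile) - \kt(r, \profile) \ge |M|\cdot\kt(r,r') - \big(|V|-|M|\big)\cdot\kt(r,r') = \big(2|M| - |V|\big)\,\kt(r,r').
\]
Since $|M| \ge |V|/2$ the coefficient $2|M| - |V|$ is non-negative and $\kt(r,r') \ge 0$, so the difference is non-negative for every $r'$. Hence $r$ minimises the Kemeny score and is a Kemeny ranking of $\profile$.

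There is essentially no serious obstacle beyond recognising that the metric (triangle) inequality for Kendall's tau is exactly the tool that drives the estimate; the computation collapses to a single line once the voters are split into $M$ and its complement. The only point worth flagging explicitly is the reading of ``majority'': a weak majority ($|M| \ge |V|/2$) already suffices to make $r$ a (possibly tied) Kemeny ranking via $2|M|-|V| \ge 0$, which matches the paper's ``(weak) majority'' usage. It is also worth remarking — consistent with the general median interpretation — that single-peakedness of $\profile$ is not used anywhere in the argument, in contrast to the reverse (Kemeny-ranking-implies-majority) direction for the \emph{extreme} opinions, which did rely on the single-peaked structure.
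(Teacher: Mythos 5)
Your proof is correct and follows essentially the same route as the paper's: the same partition of $V$ into the set $M$ of voters with opinion $r$ and its complement, the same application of the triangle inequality for Kendall's tau, and the same use of $|M|\ge|V|/2$ to balance the two sums, with the only cosmetic difference being that you argue directly while the paper argues by contradiction against an assumed strictly better Kemeny ranking $r'$. Your side remarks (that a weak majority suffices and that single-peakedness is not used) are accurate and consistent with the paper's argument.
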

\begin{proof}
Suppose that the majority of voters $V$ in $\profile$ has opinion $r$ and that $r$ is not a Kemeny ranking of $\profile$. Let $r'$ be some Kemeny ranking of $\profile$, let $d=\kt(r,r')$ and let $M$ be the set of voters in $V$ with opinion $r$. The Kemeny score of $r'$ must be smaller than the one of $r$ with respect to $\profile$. 
Furthermore, for every voter $v \in V$ we have $\kt(r,\succ_v)\leq \kt(r',\succ_v)+d$. 
This yields a contradiction:
\begin{align*}
\sum_{v \in V} \kt(r,\succ_v) &= \sum_{v \in V\setminus M} \kt(r,\succ_v) \\
&\leq \sum_{v \in V\setminus M} (\kt(r',\succ_v) + d) \\
&= \sum_{v \in V\setminus M} \kt(r',\succ_v) + |V\setminus M|\cdot d\\
&\le \sum_{v \in V\setminus M} \kt(r',\succ_v) + |M|\cdot d\\
&= \sum_{v \in V\setminus M} \kt(r',\succ_v) + \sum_{v \in M} \kt(r',\succ_v)\\
&= \sum_{v \in V} \kt(r',\succ_v)
\end{align*}  
    \end{proof}
 }

MMC, just like Kemeny's rule, outputs an extreme opinion if and only if a (weak) majority of voters has this opinion.
\begin{theorem}[\appref{prop: mmc-extremist-majority-consistent}]\label{prop: mmc-extremist-majority-consistent}
MMC is extremist majority consistent.
\end{theorem}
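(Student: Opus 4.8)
The plan is to mirror the structure of the Kemeny proof (\cref{le: update-to-extremes-kemeny}) but adapted to MMC scores. By symmetry it suffices to treat $r^\uparrow = c_1 \succ \dots \succ c_m$. One direction is easy: if a weak majority of voters in $\profile=(C,V)$ hold opinion $r^\uparrow$, then $c_1$ beats every other candidate in a pairwise comparison by at least a weak majority, so $\pop(c',c_1) \le 0$ for all $c'$, giving $\mmc(c_1) \le 0$. Meanwhile any other candidate $c_j$ is defeated by $c_1$ at least weakly (those majority voters all rank $c_1$ first), so $\mmc(c_j) \ge \pop(c_1,c_j) \ge 0 \ge \mmc(c_1)$. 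Hence $c_1$ has minimal MMC score; combined with single-peakedness preservation (\cref{thm:mmc_sp_preserving}) and the fact that the only single-peaked ranking with $c_1$ on top is $r^\uparrow$, the tie-breaking rule will select $r^\uparrow$, so $r^\uparrow \in \rrule_{\mmc}(\profile)$.

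For the converse I would assume $r^\uparrow \in \rrule_{\mmc}(\profile)$ and deduce that a weak majority holds $r^\uparrow$. Since $r^\uparrow$ ranks $c_1$ on top, $c_1$ must have minimal MMC score, so $\mmc(c_1) \le \mmc(c_j)$ for all $j$. The key is to exploit \cref{lem:mmc_worst_defeat}: $c_1$'s worst defeat is against its only axis-neighbour $c_2$, so $\mmc(c_1) = \pop(c_2,c_1)$. The plan is to show this forces $\pop(c_2,c_1) \le 0$, i.e.\ $c_1$ is a weak Condorcet winner. Suppose instead $\pop(c_2,c_1) > 0$, meaning a strict majority prefers $c_2 \succ c_1$. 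By single-peakedness those voters all rank $c_1$ last, so $c_1$ is strictly defeated by every candidate; then $c_1$ would be the unique worst candidate and could not be ranked on top by any single-peaked-preserving rule with the given tie-breaking — contradicting $r^\uparrow$ being output. Hence $c_1$ is a weak Condorcet winner.

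Knowing $c_1$ is a weak Condorcet winner, I would invoke the Median Voter Theorem (\cref{le: extension-of-median-voter-thm}): $c_1$ is a (weak) Condorcet winner exactly when it is a median peak, which for the leftmost candidate means at least half the voters have peak $c_1$. The final step is to upgrade ``peak $c_1$'' to ``opinion exactly $r^\uparrow$'': by single-peakedness, the \emph{only} ranking in $\mathcal{L}_{sp}^\rhd(C)$ whose peak is $c_1$ is $r^\uparrow$ itself, since once $c_1$ is the peak, single-peakedness forces $c_1 \succ c_2 \succ \dots \succ c_m$. Therefore the weak majority with peak $c_1$ all hold opinion $r^\uparrow$, completing the equivalence.

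The main obstacle I anticipate is the converse direction's interaction with tie-breaking: it is conceivable that $c_1$ ties with some $c_j$ in MMC score while $c_1$ is only a weak (not strict) Condorcet winner, and I must ensure the argument that $c_1$ being top-ranked forces $\pop(c_2,c_1) \le 0$ is airtight. The cleanest route, which I would adopt, is to argue directly on popularity margins via \cref{lem:mmc_worst_defeat} rather than through Condorcet-winner language: because $c_1$'s worst defeat is only against $c_2$ while $c_2$'s worst defeat (against $c_1$ or $c_3$) and every other candidate's worst defeat also at least match $\pop(c_1,\cdot)$, a careful sign analysis shows $\mmc(c_1) \le 0$ is equivalent to $\pop(c_2,c_1)\le 0$, which in turn is equivalent to a weak majority ranking $c_1 \succ c_2$ — and the unique single-peaked such ranking sharing this majority structure is $r^\uparrow$.
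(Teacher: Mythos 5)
The direction you call ``easy'' is where the real work lies, and your argument for it has a genuine gap. From a weak majority holding $r^\uparrow$ you correctly deduce that $c_1$ minimises the MMC score, but that alone does not put $r^\uparrow$ into $\rrule_{\mmc}(\profile)$: since MMC orders \emph{all} candidates by non-decreasing score, $r^\uparrow\in\rrule_{\mmc}(\profile)$ requires $\mmc(c_1)\le\mmc(c_2)\le\dots\le\mmc(c_m)$, and you say nothing about the relative scores of $c_2,\dots,c_m$. Appealing to \cref{thm:mmc_sp_preserving} does not close this: it guarantees \emph{some} single-peaked output ranking, whose top could be a candidate $c_j$ with $j>1$ tied with $c_1$ for the minimum, in which case that ranking is not $r^\uparrow$. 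Nor can tie-breaking help --- extremist majority consistency is defined via membership in the raw output set $\rrule(\profile)$, and tie-breaking only selects among rankings already in that set; it cannot insert $r^\uparrow$. The missing step is exactly the paper's argument for this direction: if $\mmc(c_j)>\mmc(c_{j+k})$ for some $j$ and $k>0$, then \cref{lem:mmc_worst_defeat} together with $\pop(c_j,c_{j+1})\ge 0$ forces $\pop(c_{j-1},c_j)>\pop(c_{j+k-1},c_{j+k})$, so some voter must rank $c_{j-1}\succ c_j$ but $c_{j+k}\succ c_{j+k-1}$, which violates single-peakedness.

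Your converse direction has a sound skeleton (and is arguably a cleaner route than the paper's, which instead propagates $\pop(c_{i+1},c_i)>0$ along the whole axis to reach a contradiction at $c_m$), but two intermediate claims are wrong as stated. First, a voter preferring $c_2\succ c_1$ need \emph{not} rank $c_1$ last: $c_2\succ c_1\succ c_3$ is single-peaked w.r.t.\ $c_1\rhd c_2\rhd c_3$. Second, ``$c_1$ is strictly defeated by every candidate, hence cannot be top-ranked'' is not a valid inference for MMC --- the paper's \cref{obs:minmaxcond-cl-inconsistent} exists precisely because MMC mistreats Condorcet losers. The correct fix is direct: if $\pop(c_2,c_1)>0$ then $\mmc(c_1)=\pop(c_2,c_1)>0$ by \cref{lem:mmc_worst_defeat}, while the weak Condorcet winner guaranteed by \cref{le: extension-of-median-voter-thm} has MMC score at most $0$, so $c_1$ would not be a minimiser --- contradiction. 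From $\pop(c_2,c_1)\le 0$ onward, your route through the Median Voter Theorem and the uniqueness of the single-peaked ranking with peak $c_1$ is fine.
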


\appendixproof{prop: mmc-extremist-majority-consistent}{
\begin{proof}
We prove that~$r^\uparrow = c_1 \succ \dots \succ c_m \in R_\mmc$ if and only if a weak majority of voters in $\profile$ has opinion $r^\uparrow$.
The equivalence then also holds for $r^\downarrow$ due to symmetry.

First, let $r^\uparrow$ be among the winners of $\profile$ according to MMC, that is, for all~$j \in \{2, \dots, m\}$ we have
\begin{equation}\label{eq:update-to-extremes-MMC-1}
    \mmc(c_j) \geq \mmc(c_1).
\end{equation}
Towards a contradiction assume that a strict majority of voters has a preference that differs from $r^\uparrow$.
Since $r^\uparrow$ is the only single-peaked ranking that ranks $c_1$ over $c_2$, this implies 
$|\{ v \in V: c_2 \succ_{v} c_1 \}| > |\{ v \in V : c_1 \succ_{i} c_2 \}|$ and equivalently $\pop(c_2, c_1) > 0$.
We claim that for $i\in[m-1]$ we have $\pop(c_{i+1}, c_i)>0$.
By \Cref{lem:mmc_worst_defeat} this is true for $i=1$.
So suppose $\pop(c_{i+1}, c_i)>0$ for some $i<m-1$.
By \cref{lem:mmc_worst_defeat} we have $\mmc(c_{i+1}) = \max \{\pop(c_i, c_{i+1}), \pop(c_{i+2}, c_{i+1})\}$.
As $\pop(c_i, c_{i+1}) = -\pop(c_{i+1}, c_i) < 0$ and $\mmc(c_{i+1}) \ge \mmc(c_1) > 0$ by \cref{eq:update-to-extremes-MMC-1}, we have $\pop(c_{i+2}, c_{i+1})>0$.
But then $\mmc(c_m) = \pop(c_{m-1}, c_m) = -\pop(c_m, c_{m-1}) < 0$, which contradicts \cref{eq:update-to-extremes-MMC-1} and completes the first direction of the proof.

Now let a (weak) majority of voters have preference $r^\uparrow$.
Then for each $j \in [m-1]$ we have 
\[ |\{ v \in V : c_j \succ_v c_{j+1} \}| \geq |\{ v \in V : c_{j+1} \succ_v c_{j} \}| \]
and thus $\pop(c_j, c_{j+1}) \geq 0$.
By \Cref{lem:mmc_worst_defeat} a worst defeat a candidate $c_j$ experiences is against either $c_{j-1}$ or $c_{j+1}$. Since $\pop(c_j, c_{j+1}) \geq 0$, either candidate $c_j$ wins or ties against every other candidate or its worst defeat is against $c_{j-1}$ (and this is not a tie, i.e., $\pop(c_{j-1}, c_{j}) > 0$).
Towards a contradiction now assume that $r^\uparrow$ is not an MMC winner ranking, i.e., there is a candidate $c_j$ and some $k \in \mathbb{N}_{>0}$ such that
\[ \max_{b \in C} \pop(b, c_j) > \max_{b \in C} \pop(b, c_{j+k}). \]
Due to the strict inequality, $c_j$ cannot win or tie against everyone, so we obtain $\pop(c_{j-1}, c_j) > 0$.
Further, by \Cref{lem:mmc_worst_defeat} the above inequality is equivalent to
\begin{align*}
	&\max \left\{ \pop(c_{j-1}, c_{j}), \pop(c_{j+1}, c_{j}) \right\}\\ >& \max \left\{ \pop(c_{j+k-1}, c_{j+k}), \pop(c_{j+k+1}, c_{j+k}) \right\}.
\end{align*}
Since $\pop(c_j, c_{j+1}) \geq 0$ (and thus $\pop(c_{j+1}, c_{j}) \leq 0$) the maximum on the left is obtained by $\pop(c_{j-1}, c_{j})$. In particular $\pop(c_{j-1}, c_{j}) > \pop(c_{j+k-1}, c_{j+k})$. This means that more voters rank $c_{j-1} \succ c_j$ than $c_{j+k-1} \succ c_{j+k}$.
But all voters with preference $r^\uparrow$ rank $c_{j-1} \succ c_j$ and $c_{j+k-1} \succ c_{j+k}$. Thus at least one of the remaining voters has to rank $c_{j-1} \succ c_j$ and $c_{j+k} \succ c_{j+k-1}$.
But these preferences are a violation of the single-peaked axis and thus can not occur in $\profile$. This violates the assumption that $r^\uparrow$ is not an MMC winner ranking.
   \end{proof}
 }

\subsection{Maximally Spreading Extreme Opinions for Extremist Majority Consistent Update Rules}

We will next prove that for an extremist majority consistent ranking rule $\rrule$ and an extreme opinion~$r^*$ the following greedy update sequence~$\sigma^*$ reaches a stable state that maximises the number of voters with opinion~$r^*$.
Note that due to step (3) below this sequence is only well defined if there exists a finite sequence of updates w.r.t.\ $\rrule$ that leads to a stable state for any network.

	\paragraph{\textsc{Greedy Sequence} $\sigma^*$ \textsc{for extreme opinion} $r^*$}

	\begin{enumerate}[(1)]
		\item Update every non-stable voter with opinion~$r' \ne r^*$ to opinion~$r^*$ if possible.
		\item Update every non-stable voter with opinion~$r^*$.
		\item Stabilize network: update non-stable voters with opinions~$r' \ne r^*$.
	\end{enumerate}

A similar algorithm was used before in the setting with two competing opinions in \cite{auletta2015,BE17} and in the setting of \emph{discrete preference games} in \cite{chierichetti2013discrete}.
In particular, \citet[Proposition 1]{BE17} show that in the binary model with only two possible opinions, first updating every non-stable voter with the second opinion to the first opinion, and then updating every non-stable voter with the first opinion to the second opinion yields a stable outcome with a maximum number of voters having the first opinion.
Similarly, we derive the following result.
\begin{lemma}[\appref{lem:extremist-bin}]
	\label{lem:extremist-bin}
	Let $G=(V,E)$ be a preference network with single-peaked preference profile~$\profile$ and let~$\rrule$ be an extremist majority consistent ranking rule.
	The subsequence $\sigma$ of updates in~$\sigma^*$ that only runs steps (1) and (2) satisfies:
	\begin{enumerate}[(i)]
		\item\label{item:extremist-length} The length of~$\sigma$ is at most~$2|V|$.
		\item\label{item:extremist-changes} Each voter changes their opinion at most twice on $\sigma$.
		\item\label{item:extremist-time} Computing~$\sigma$ requires~$O(|V|^2)$ executions of~$\rrule$.
		\item\label{item:extremist-stable} After~$\sigma$, the set~$V^*$ of voters that still have opinion~$r^*$ is stable.
		\item\label{item:extremist-equal} For every sequence~$\sigma'$ that maximises the number of voters with opinion~$r^*$ such that every such voter is stable, the set of voters with opinion~$r^*$ is equal to~$V^*$.
	\end{enumerate}
\end{lemma}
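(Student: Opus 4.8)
The plan is to reduce the entire analysis to the monotone binary majority dynamics of \citet{BE17} and transfer their Proposition~1. WLOG take $r^* = r^\uparrow$. The crucial observation is that, by extremist majority consistency, every update occurring in $\sigma$ depends only on the partition of the voters into those holding $r^*$ and those holding some $r' \ne r^*$: a voter $v$ may validly adopt $r^*$, and a voter already holding $r^*$ is stable, if and only if a weak majority of $N(v)$ holds $r^*$; conversely a voter holding $r^*$ without this weak majority is not stable and, when updated, switches to some $r' \ne r^*$ (since then $r^* \notin \rrule(\profile[N(v)])$). Hence, writing $S$ for the current set of voters holding $r^*$, the only changes to $S$ in $\sigma$ are promotions ($S \mapsto S \cup \{v\}$, possible iff a weak majority of $N(v)$ lies in $S$) in step~(1) and demotions ($S \mapsto S \setminus \{v\}$, forced iff a weak majority of $N(v)$ does not lie in $S$) in step~(2); the particular non-extreme opinion a demoted voter adopts is irrelevant to $S$. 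This is exactly the binary majority model, so I can follow \citet[Proposition~1]{BE17}.

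Items (i)--(iv) are then short. In step~(1) the set $S$ only grows, so each voter is promoted at most once; in step~(2) it only shrinks, so each voter is demoted at most once. This bounds each step by $|V|$ updates, giving (i) $|\sigma| \le 2|V|$ and (ii) at most two opinion changes per voter. For (iii), each of the two steps is run in at most $|V|+1$ sweeps over the voters, every sweep testing each voter once and promoting (resp.\ demoting) at least one voter unless it is the final sweep; testing promotability and stability reduces to counting $r^*$-neighbours, so $O(|V|^2)$ evaluations of $\rrule$ suffice. For (iv), step~(2) demotes non-stable $r^*$-voters until none remain; it terminates because $S$ strictly decreases, and at termination every voter still holding $r^*$ has a weak majority of $r^*$-neighbours, i.e.\ $V^*$ is stable.

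The substance is item (v), which I would establish through two monotonicity facts. Call a set $T$ \emph{self-supporting} if every $v \in T$ has a weak majority of $N(v)$ in $T$ (exactly the condition under which all of $T$ can stably hold $r^*$). Starting from the initial $r^*$-set $S_0$, step~(1) performs only monotone promotions and therefore reaches the closure $\bar S := \mathrm{cl}(S_0)$, the least superset of $S_0$ admitting no further promotion; this closure is order-independent by monotonicity. Moreover $\bar S$ upper-bounds the $r^*$-set of \emph{every} reachable configuration, in any update sequence: by induction, a promotion of $v$ from a set $S \subseteq \bar S$ needs a weak majority of $N(v)$ in $S \subseteq \bar S$, which, since $\bar S$ admits no further promotion, forces $v \in \bar S$, while demotions and non-extreme updates keep the $r^*$-set inside $\bar S$. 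Second, self-supporting sets are closed under union (if $v$ lies in one of them, its supporting weak majority survives in the larger set), so there is a unique maximum self-supporting set $M^*$ contained in $\bar S$. The removal process of step~(2) started from $\bar S$ computes exactly $M^*$ regardless of the order of demotions: no element of $M^*$ is ever removed (each keeps a weak majority of $N(v)$ in $M^* \subseteq S$ throughout), while the process terminates at a self-supporting subset of $\bar S$, which by maximality is contained in $M^*$; hence $V^* = M^*$.

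Finally I combine the two facts. Any sequence $\sigma'$ whose final $r^*$-set $S$ is stable has $S \subseteq \bar S$ (the upper bound) and $S$ self-supporting (stability), so $S \subseteq M^* = V^*$. Thus no such sequence can realise more than $|V^*|$ stable $r^*$-voters, and if $\sigma'$ attains this maximum then $|S| = |M^*|$ forces $S = V^*$, proving (v). I expect the main obstacle to be precisely this last item: one must argue simultaneously that $\bar S$ bounds reachability from above (so promotions never overshoot) and that the demotion ``core'' $M^*$ is both order-independent and genuinely maximal among stable configurations, with the union-closure of self-supporting sets as the linchpin that makes the maximiser unique. A secondary point requiring care is the reduction itself -- verifying that no update in $\sigma$ other than a promotion or a demotion can change $S$ -- which is exactly where extremist majority consistency, rather than mere weak Condorcet consistency, is needed.
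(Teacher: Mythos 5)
Your proof is correct, and for items (i)--(iv) it matches the paper's (which simply declares them immediate). For item (v), however, you take a genuinely different route. The paper argues by a direct update-by-update comparison between $\sigma$ and an arbitrary maximising sequence $\sigma'$: it picks the \emph{first} voter promoted to $r^*$ by $\sigma'$ but not by $\sigma$ and derives a contradiction from extremist majority consistency (all of that voter's $r^*$-neighbours at that moment are also $r^*$ at the end of step (1) of $\sigma$, so $\sigma$ would have promoted it too), and then handles demotions symmetrically. You instead characterise $V^*$ intrinsically: step (1) computes the promotion-closure $\bar S$ of the initial $r^*$-set, $\bar S$ upper-bounds the $r^*$-set of \emph{every} reachable configuration, and step (2) extracts the unique maximum self-supporting subset $M^*\subseteq\bar S$, with union-closure of self-supporting sets giving uniqueness; any stable $r^*$-set of any sequence is then a self-supporting subset of $\bar S$ and hence contained in $M^*=V^*$, so a maximiser must equal it. Your version buys a clean structural description of $V^*$ and makes the order-independence explicit, at the cost of introducing the closure/core machinery; the paper's first-deviation induction is shorter but leaves implicit exactly the monotonicity facts you isolate (in particular, that voters holding $r^*$ from the start and never demoted under $\sigma'$ are also covered). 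Both arguments hinge on the same point you flag at the end: extremist majority consistency is what collapses the dynamics, as far as membership in the $r^*$-set is concerned, to monotone binary weak-majority dynamics.
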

\appendixproof{lem:extremist-bin}{
\begin{proof}
	Properties~\eqref{item:extremist-length} and~\eqref{item:extremist-changes} are immediate.
	As finding all non-stable voters after each of the~$O(|V|)$ updates takes~$O(|V|)$ executions of~$\rrule$ each, Property~\eqref{item:extremist-time} holds.
	After completing~$\sigma$, a non-stable voter with opinion~$r^*$ contradicts that step (2) was completed, thus Property~\eqref{item:extremist-stable} holds.
	Let us focus on Property~\eqref{item:extremist-equal}.
    Denote by~$\overline{r^*}$ the set of single-peaked rankings that are different from~$r^*$.
    Note that after completing~$\sigma$, apart from every voter with opinion~$r^*$ being stable, there is also no voter~$v$ with an opinion in~$\overline{r^*}$ that can be updated to~$r^*$.
	This is true since the number of $v$'s neighbours with opinion~$r^*$ cannot have increased after step~(1), and~$v$ was not updated to~$r^*$ within step~(1).
	Now let~$\sigma'$ be as in Property~\eqref{item:extremist-equal}.
	We show that (a)~each voter that was updated from an opinion in~$\overline{r^*}$ to~$r^*$ by~$\sigma'$ also was updated from an opinion in~$\overline{r^*}$ to~$r^*$ by~$\sigma$ and that (b)~each voter that was updated from~$r^*$ to an opinion in~$\overline{r^*}$ by~$\sigma$ has an opinion in~$\overline{r^*}$ after completing~$\sigma'$.
	For~(a), assume for contradiction that there is a voter that updated from~$\overline{r^*}$ to~$r^*$ by~$\sigma'$, but not under~$\sigma$.
	Let~$v'$ be the first such voter in~$\sigma'$, that is, the opinion of~$v'$ is changed in step~$k$ of~$\sigma'$ and all voters that were updated from~$\overline{r^*}$ to~$r^*$ in some step~$k' < k$ of~$\sigma'$ also were updated from~$\overline{r^*}$ to~$r^*$ by~$\sigma$.
	Then all neighbours of~$v'$ that have opinion~$r^*$ after the~$(k-1)$-st update of~$\sigma'$ (which must be a majority because of extremist majority consistency of $\rrule$) also have opinion~$r^*$ at the end of step (1) of~$\sigma$.
	Hence, a majority of the neighbours of~$v'$ would have opinion~$r^*$ at the end of step (1), but~$v'$ is not updated by~$\sigma$ -- a contradiction to $\rrule$ being extremist majority consistent.
	The proof for~(b) is analogous.
\end{proof}
} %

If the ranking rule in use converges, then we are guaranteed that step~(3) has a finite number of updates.
What remains to show is that, in this case, running step~(3) does not affect the stability of the voters with opinion~$r^*$ and that the set $V^*$ of voters with opinion~$r^*$ after completing~$\sigma^*$ is the same as for any other sequence that reaches a stable state and maximises the spread of opinion~$r^*$.

\begin{proposition}[\appref{prop:extremist-greedy}]
	\label{prop:extremist-greedy}
	Let~$G=(V,E)$ be a network with single-peaked preference profile~$\profile$ and an extremist majority consistent and converging ranking rule~$\rrule$, and let~$r^*$ be an extreme opinion.
	Let~$V^*$ be the set of voters with opinion~$r^*$ after completing steps (1) and (2) of~$\sigma^*$.
	The sequence~$\sigma^*$ satisfies:
	\begin{enumerate}[(i)]
		\item After completing~$\sigma^*$, every voter is stable.
		\item During step (3) of~$\sigma^*$, $V^*$ remains unchanged.
		\item For every sequence~$\sigma'$ that maximises the number of voters with opinion~$r^*$ such that every voter (independent of its opinion) becomes stable, the set of voters with opinion~$r^*$ is equal to~$V^*$.
	\end{enumerate}
\end{proposition}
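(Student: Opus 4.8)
The plan is to prove part~(ii) first, bootstrap part~(i) from it, and then reduce part~(iii) to \cref{lem:extremist-bin}. Everything hinges on a single monotonicity invariant: once step~(1) has finished, no voter whose opinion differs from~$r^*$ has a weak majority of neighbours holding~$r^*$. This is immediate from the definition of step~(1), which flips to~$r^*$ every non-$r^*$ voter that can be flipped: at its completion every remaining non-$r^*$ voter~$u$ satisfies $r^* \notin \rrule(\profile[N(u)])$, which by extremist majority consistency means that~$u$ lacks a weak $r^*$-majority in~$N(u)$.

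For part~(ii) I would show that the set of $r^*$-voters never grows after step~(1). Step~(2) only updates voters currently holding~$r^*$, so it can only turn such voters into non-$r^*$ ones; hence the number of $r^*$-neighbours of every voter is non-increasing across step~(2), and the invariant survives to the start of step~(3). Step~(3) updates only non-$r^*$ voters, so I would argue by induction on its updates: if the invariant holds before an update, the updated voter has no weak $r^*$-majority and therefore cannot adopt~$r^*$, so no new $r^*$-voter is created and the invariant persists. Since step~(3) neither creates nor removes $r^*$-voters (it never touches the voters of~$V^*$), the set~$V^*$ is constant throughout step~(3), which is exactly part~(ii).

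For part~(i) the constancy of~$V^*$ freezes the number of $r^*$-neighbours of every voter during step~(3); since a voter holding~$r^*$ is stable exactly when a weak majority of its neighbours hold~$r^*$ (by extremist majority consistency together with the tie-breaking rule that retains a voter's opinion whenever it lies in the winning set), every voter of~$V^*$---stable at the start of step~(3) by \cref{lem:extremist-bin}, property~\eqref{item:extremist-stable}---stays stable, while convergence of~$\rrule$ forces step~(3) to terminate with all non-$r^*$ voters stable as well. For part~(iii) I would compare the class~$\mathcal C_1$ of sequences in which every $r^*$-voter ends stable with its subclass~$\mathcal C_2\subseteq\mathcal C_1$ of sequences in which \emph{every} voter ends stable. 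By \cref{lem:extremist-bin}, property~\eqref{item:extremist-equal}, every $r^*$-maximiser within~$\mathcal C_1$ has $r^*$-set exactly~$V^*$, so the $\mathcal C_1$-optimum is~$|V^*|$; since~$\sigma^*\in\mathcal C_2$ realises the set~$V^*$ (by parts~(i) and~(ii)), the $\mathcal C_2$-optimum is also~$|V^*|$, whence every $\mathcal C_2$-maximiser attains~$|V^*|$, lies in~$\mathcal C_1$, is therefore a $\mathcal C_1$-maximiser, and so has $r^*$-set~$V^*$.

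The main obstacle is the inductive maintenance of the invariant during step~(3): one must ensure that the reshuffling among the non-$r^*$ opinions never pushes some voter across the weak-majority threshold for~$r^*$. This is precisely why the argument needs~$V^*$ to be provably \emph{constant} rather than merely bounded: only with the $r^*$-neighbour counts pinned down can one guarantee that no previously unavailable ``if possible'' flip becomes available in step~(3), and it is the same frozen counts that keep the voters of~$V^*$ stable so that convergence can finish stabilising the remaining voters.
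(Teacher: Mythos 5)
Your proof is correct and takes essentially the same route as the paper's own (three-sentence) argument, which derives (ii) from extremist majority consistency, (i) from the stability of~$V^*$ together with convergence, and (iii) from \cref{lem:extremist-bin}. You simply make explicit the weak-majority invariant after step~(1) and the comparison between the two classes of maximising sequences that the paper leaves implicit.
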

\appendixproof{prop:extremist-greedy}{
\begin{proof}
	As every voter in~$V^*$ is stable, and every voter in~$V\setminus V^*$ is made stable during step~(3), Property (i) holds.
	Property~(ii) follows from $\rrule$ being extremist majority consistent. 
	Property~(iii) is an immediate consequence of \cref{lem:extremist-bin}, Property (ii) and the convergence of~$\rrule$.
\end{proof}
} %

\begin{corollary}
	\label{thm:kemeny-extremist-efficient}
	Let~$G = (V, E)$ be a network with preference profile~$\profile=(C,V)\in \mathcal{L}_{sp}^\rhd(C)$.
	Then one can compute a sequence of Kemeny updates that maximises the number of voters with opinion~$r^* \in \{r^\uparrow,r^\downarrow\}$ in~$O(|V|^3\cdot |C|^4)$ time.
\end{corollary}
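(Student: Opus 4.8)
The plan is to instantiate the generic greedy machinery of \cref{prop:extremist-greedy} with Kemeny's rule and then account for the running time. First I would check that Kemeny's rule satisfies all hypotheses required there: it preserves single-peakedness (\cref{obs:kemeny-sp}), it is extremist majority consistent (\cref{le: update-to-extremes-kemeny}), and it converges on single-peaked networks (\cref{lem:kemeny-convergence}). Hence for the prescribed extreme opinion $r^*$ the sequence $\sigma^*$ is well defined, terminates in a stable state, and by \cref{prop:extremist-greedy} the resulting set of voters holding $r^*$ equals the set obtained by \emph{any} sequence that maximises the number of stable $r^*$-voters. This settles correctness, and it remains only to bound the time needed to produce $\sigma^*$ together with the final profile.

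For the time bound I would split $\sigma^*$ into its two phases. By \cref{lem:extremist-bin}, computing the prefix $\sigma$ consisting of steps~(1) and~(2) uses $O(|V|^2)$ evaluations of the rule, and by \cref{lem:kemeny-time} each Kemeny evaluation costs $O(|V|\cdot|C|^2)$, so this phase runs in $O(|V|^3\cdot|C|^2)$ time. For the stabilisation phase (step~(3)) I would invoke the potential argument behind \cref{lem:kemeny-convergence}: the quantity $\sum_{\{u,v\}\in E}\kt(\succ_u,\succ_v)$ is at most $|E|\cdot\binom{|C|}{2}$ and drops by at least~$1$ with every genuine update, so step~(3) contains at most $|E|\cdot\binom{|C|}{2}$ updates. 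Charging $O(|V|\cdot|C|^2)$ per update (\cref{lem:kemeny-time}) and using $|E|=O(|V|^2)$ and $\binom{|C|}{2}=O(|C|^2)$ yields $O(|V|^3\cdot|C|^4)$ for the updates themselves, which dominates the first phase and matches the claimed bound.

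The main obstacle I expect is not the arithmetic but the bookkeeping in step~(3): after each update one must locate the next non-stable voter without paying a full rescan of all $|V|$ neighbourhoods per update, since that naive strategy would inflate the estimate by a factor of~$|V|$. The plan is to maintain a worklist of ``possibly unstable'' voters, initialised with $V\setminus V^*$ (recall that $V^*$ is already stable after $\sigma$ by \cref{lem:extremist-bin}), and after updating a voter~$v$ to re-enqueue only the neighbours of~$v$, as no other voter's Kemeny outcome can have changed. Using that $V^*$ is never touched during step~(3) (\cref{prop:extremist-greedy}), every dequeued $V^*$-voter is discarded after a single check, and I would argue that the total number of re-checks is absorbed into the per-update budget so that the whole phase still fits within $O(|V|^3\cdot|C|^4)$; carefully bounding this re-check overhead against the $O(|E|\cdot|C|^2)$ updates is the delicate part. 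The analysis is symmetric in $r^\uparrow$ and $r^\downarrow$, so it applies directly to the given $r^*$.
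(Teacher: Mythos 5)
Your argument is correct and is essentially the paper's own (implicit) derivation: correctness follows from \cref{obs:kemeny-sp}, \cref{le: update-to-extremes-kemeny}, \cref{lem:kemeny-convergence}, \cref{lem:extremist-bin} and \cref{prop:extremist-greedy}, and the running time is the product of the $|E|\cdot\binom{|C|}{2}$ bound on the number of updates with the $O(|V|\cdot|C|^2)$ per-update cost from \cref{lem:kemeny-time}. The bookkeeping overhead for locating non-stable voters that you flag is likewise left implicit in the paper, so your proposal matches the intended proof.
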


\begin{corollary}
	\label{thm:MMC-extremist-sequence}
	Let~$G = (V, E)$ be a network with preference profile~$\profile=(C,V)\in \mathcal{L}_{sp}^\rhd(C)$.
		Then there exists a finite sequence of MMC updates that maximises the number of voters with opinion~$r^* \in \{r^\uparrow,r^\downarrow\}$ and ends in a stable state.
\end{corollary}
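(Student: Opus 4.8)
The plan is to obtain this corollary as a direct specialisation of the general machinery built in \cref{prop:extremist-greedy}, instantiated with the update rule $\rrule$ taken to be MMC. The greedy sequence $\sigma^*$ for the extreme opinion $r^*$ is precisely the witness we need, so the whole proof reduces to checking that MMC meets the two structural hypotheses required by \cref{prop:extremist-greedy}, namely that it is extremist majority consistent and that it converges. Since both of these facts have already been established for MMC, the corollary should follow with almost no additional work.

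Concretely, first I would invoke \cref{prop: mmc-extremist-majority-consistent,lem:minimax-convergence} to record that MMC is extremist majority consistent and that every MMC update sequence on a single-peaked network is finite. Convergence is exactly what makes step (3) of $\sigma^*$ (stabilising the remaining voters) terminate, so I would use it to argue that $\sigma^*$ is well defined in the first place. Next I would establish that $\sigma^*$ is finite as a whole: by \cref{lem:extremist-bin} the subsequence $\sigma$ comprising steps (1) and (2) has length at most $2|V|$, and step (3) is a finite MMC update sequence by convergence; hence $\sigma^*$ has finite length. Finally, applying \cref{prop:extremist-greedy}, property~(i) gives that after $\sigma^*$ every voter is stable (so the process ends in a stable state), and property~(iii) gives that the resulting set $V^*$ of voters with opinion $r^*$ is maximum over all stabilising sequences. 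Together these yield all three requirements of the statement.

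I do not expect any genuine obstacle here, since the conceptual difficulty was already absorbed into the proofs of convergence (\cref{lem:minimax-convergence}) and of extremist majority consistency (\cref{prop: mmc-extremist-majority-consistent}); what remains is purely the verification that MMC satisfies the abstract premises of \cref{prop:extremist-greedy}. The one point worth stating carefully is the well-definedness of $\sigma^*$: step (3) only makes sense once we know that a stabilising update sequence exists, so I would be sure to cite \cref{lem:minimax-convergence} before asserting that $\sigma^*$ exists, rather than treating the greedy sequence as a priori defined.
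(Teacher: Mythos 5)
Your proposal is correct and follows exactly the paper's intended derivation: the corollary is obtained by instantiating the greedy sequence $\sigma^*$ of \cref{prop:extremist-greedy} with MMC, whose two required hypotheses are supplied by \cref{prop: mmc-extremist-majority-consistent} (extremist majority consistency) and \cref{lem:minimax-convergence} (convergence), the latter also guaranteeing that step~(3) terminates so that $\sigma^*$ is well defined. You correctly claim only finiteness rather than an explicit running-time bound, which matches the paper's observation that no quantitative bound on MMC update sequence length is available.
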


\section{Conclusion}\label{sec: conclusion}
Motivated by the question of how single-peaked opinions propagate in social networks we studied opinion diffusion processes in this domain.
We investigated which rules admit a single-peaked outcome given single-peaked preferences.
For these rules we then established convergence for arbitrary update sequences and provided an algorithm that outputs an update sequence to optimally spread an extreme opinion in any preference network.

We conclude by suggesting future research directions.
While single-peakedness is a well-established domain restriction, the study of opinion diffusion under single-crossing or single-caved domains is also well motivated.
Further, our initial study on opinion diffusion may be extended: can we efficiently compute update sequences to optimally spread non-extreme opinions?
Lastly, while many ranking rules coincide with Kemeny's rule in the single-peaked domain, this is not necessarily true for preference profiles with bounded single-peaked width \cite{CGS12}.
In terms of opinion diffusion, while a Kemeny ranking can be computed efficiently whenever the single-peaked width is small~\cite{CGS13}, it is not clear whether, e.g., our greedy sequence is applicable in this scenario.

\section*{Acknowledgements}
Anne-Marie George was supported under the Norwegian Research Council Grant No. 302203
“Algorithms and Models for Socially Beneficial Artificial Intelligence”.
Jonas Israel was supported by the Deutsche Forschungsgemeinschaft under grant \mbox{BR~{4744/2-1}}.

\bibliography{references}

\clearpage
\appendix
\section*{Appendix}
\appendixProofText

\end{document}